\theoremstyle{definition}
\newtheorem{definition}{Definition}[section]
\newcommand{\iid}{\overset{\mathrm{i.i.d.}}{\sim}}
\newcommand{\mockalph}[1]{}
\newtheorem{theorem}{Theorem}[section]
\newtheorem{proposition}[theorem]{Proposition}
\newtheorem{refremark}{Remark}[section]
\newcommand{\p}{\mathbb{P}}
\newcommand{\q}{\mathbb{Q}}
\providecommand{\tightlist}{%
  \setlength{\itemsep}{0pt}\setlength{\parskip}{0pt}}\usepackage{longtable,booktabs,array}
\begin{document}

\markboth{Laub et al.}{Hawkes Models And Their Applications}

\title{Hawkes Models And Their Applications}

\author{Patrick J. Laub,$^1$ Young Lee,$^2$ Philip K. Pollett,$^3$ and Thomas Taimre$^3$
\affil{$^1$School of Risk and Actuarial Studies, University of New South Wales, Sydney, Australia, 2052; email: p.laub@unsw.edu.au}
\affil{$^2$Department of Statistics and Data Science, National University of Singapore, Singapore, Singapore, 119244}
\affil{$^3$School of Mathematics and Physics, University of Queensland, Brisbane, Australia, 4072}}

\begin{abstract}
The Hawkes process is a model for counting the number of arrivals to a system which exhibits the \emph{self-exciting} property --- that one arrival creates a heightened chance of further arrivals in the near future.
The model, and its generalizations, have been applied in a plethora of disparate domains, though two particularly developed applications are in seismology and in finance.
As the original model is elegantly simple, generalizations have been proposed which: track marks for each arrival, are multivariate, have a spatial component, are driven by renewal processes, treat time as discrete, and so on.
This paper creates a cohesive review of the traditional Hawkes model and the modern generalizations, providing details on their construction, simulation algorithms, and giving key references to the appropriate literature for a detailed treatment.
\end{abstract}

\begin{keywords}
self-exciting point process, conditional intensity, Poisson cluster process, space-time ETAS, renewal Hawkes process, interval-censored data, discrete-time Hawkes process, dynamic contagion process 
\end{keywords}
\maketitle


\section{Introduction}\label{introduction}

The key property of Hawkes processes, called \emph{self-excitation}, is that the occurrence of one event may trigger a series of similar events. For example, earthquakes trigger aftershocks, social media posts generate reactionary posts, gang violence produces retaliatory attacks, a neuron fires, triggering brain activity elsewhere, infections propagate, or stock-market transactions may trigger a chain reaction. They were introduced by Alan Hawkes in 1971 \citep{hawkes1971spectra}, and now bear his name. Commenting on the initial impact of this work, he wrote ``those self-exciting processes did not seem to have generated all that much excitement'' \citep{hawkes2021personal}. The idea was taken up by the seismology community \citep{ogata1978} who adapted it to earthquake modelling (we cover this in Sections~\ref{sec-marked-hawkes} and \ref{sec-spatiotemporal-hawkes}). The explosion in popularity of Hawkes processes in the last 20 years may possibly be attributed to financial applications \citep{hawkes2018hawkes}.

The basic Hawkes process is a stochastic process \(N(t)\) that counts the number of events (termed ``arrivals'') up to time \(t\), increasing in steps of size~1 at random times.  Self-excitation is effected by allowing the rate of arrivals to depend in a particular way on past history, an idea that is made precise in Section~\ref{the-hawkes-process-and-the-self-exciting-property}.

In one of several extensions of the basic Hawkes process, a random variable called a \emph{mark} is associated with each arrival.  For example, earthquakes occur at a specific time, but also have an epicentre in space (latitude, longitude, and depth) and have an associated magnitude. Buy or sell orders in financial markets have a time and a volume associated with them (along with much other metadata). These \emph{marked Hawkes processes} are discussed in Section~\ref{sec-marked-hawkes}.

Or, we may have multiple Hawkes processes which are \emph{mutually exciting}.  For example, in a social network, the arrival of a message may trigger more messages from the same user, but also from other users. These \emph{mutually exciting Hawkes processes} are discussed in Section~\ref{sec-mutually-exciting-hawkes}.

We discuss some of these generalizations (and combinations of these generalizations!), but there is huge breadth, and we cannot cover all of them. Others who have reviewed Hawkes process scholarship have grappled with this.  Of note are works of \citet{hawkes2018hawkes} and \citet{worrall2022fifty}, which are particularly recommended for their treatments of financial applications and Bayesian non-parametric methods respectively.

The reader is not assumed to have a background in Hawkes processes, but some background in probability theory and stochastic processes is assumed.  Section~\ref{sec-classical-hawkes} provides background on the early forms of Hawkes processes, sufficient to understand the modern extensions in Section~\ref{sec-modern-hawkes}. Expert Hawkes process readers may wish to skim Section~\ref{sec-classical-hawkes} (to learn the notation used) and go straight to Section~\ref{sec-modern-hawkes}. The split of content between ``classical'' (Section~\ref{sec-classical-hawkes}) and ``modern'' (Section~\ref{sec-modern-hawkes}) Hawkes processes is mostly historical in nature, though it also reflects our subjective view of the most coherent order to present this material and (in some cases) of the relative conceptual difficulty.

\newpage
\section{Classical Hawkes Processes}\label{sec-classical-hawkes}

The first main section of the paper will briefly go over the key concepts of the simplest (univariate, linear, exponential decay) Hawkes processes.
It will stay at an advanced level and provide key references to the reader wanting to find details and proofs.
This section can be seen as a condensed summary of the key parts of our book \citep{laub2022elements}.

\subsection{Counting and point processes}\label{counting-and-point-processes}

We begin with the basic notation and concepts required in order to formally define the classical Hawkes process, starting with the \emph{counting process} and the \emph{point process}.
There are quite general definitions for these (e.g.~spatiotemporal), but to define the classical Hawkes we only need the most basic variant.

\begin{definition}[]\protect\hypertarget{def-point-process}{}\label{def-point-process}

A stochastic process \(N(t)\), defined for \(t \ge 0\), is a \emph{counting process} starting at \(N(0) = 0\) if \(N(t)\) only takes values in \(\{0, 1, 2, \dots\}\) and increases in jumps of size \(+1\).
The random jump times \(T_1\), \(T_2\), \(\dots\) form a \emph{point process}, with \(0 < T_1 < T_2 < \dots\) a.s.
The counting process can be defined in terms of the point process as
\[
    N(t) = \sum_{i=1}^{\infty} \mathbb{I}\{T_i \le t\} = \sum_{T_i \le t} 1 \,.
\]
The point processes considered are \emph{simple} in that there cannot be more that one jump at any given time.
We write \(N_t\) and \(N(t)\) interchangeably.
\hfill\(\diamond\)

\end{definition}

Commonly one finds the related notation and nomenclature:

\begin{itemize}
\tightlist
\item
  the \(T_i\) jump times are also called \emph{arrival times},
\item
  \(T_0 \equiv 0\) though this would not be counted by \(N_t\), and
\item
  the \(E_i := T_i - T_{i-1}\) are called the \emph{inter-arrival times}.
\end{itemize}

Perhaps the simplest counting process is the \emph{(homogeneous) Poisson process}, which has \(E_i \iid \mathsf{Exponential}(\lambda)\), \(\mathbb{E}(E_i)=1/\lambda\), and \(\mathbb{E}(N_t/t)=\lambda\).

\subsection{Conditional intensity and compensators}\label{conditional-intensity-and-compensators}

\begin{definition}[]\protect\hypertarget{def-conditional-intensity-function}{}\label{def-conditional-intensity-function}

The \emph{conditional intensity process} of \(N_t\) is defined, for \(t \ge 0\), by
\begin{equation}\phantomsection\label{eq-conditional-intensity-definition}{
    \lambda^*(t) = \lim_{\Delta \searrow 0} \frac{\mathbb{E}(N_{t+\Delta} - N_t \,\mid\, \mathcal{H}_t)}{\Delta} \,,
}\end{equation}
if this limit exists.
Here \(\mathcal{H}_t\), the \emph{history} of \(N_t\), is the filtration \citep[cf.][]{billingsley2017probability} generated by \(\{N_s\}_{s < t}\).
We use \(\lambda^*_t\) and \(\lambda^*(t)\) interchangeably.
\hfill\(\diamond\)

\end{definition}

It is important to realise that \(\lambda^*_t\) is generally a random process, representing the instantaneous rate of arrival at time \(t\), given the arrival times up to (but not including) time \(t\); the superscript~${}^*$ is used to remind us of this fact \citep[p.~249]{daley2003a}.
However, if \(\lambda^*(t) \equiv \lambda(t)\) for some function \(\lambda(t)\), so that the conditional intensity does not depend on the history, then \(N_t\) is the Poisson process with rate function \(\lambda(t)\); \emph{homogeneous} if \(\lambda(t) \equiv \lambda\) does not depend on \(t\).

Note that \(N_t\) is right-continuous with left-limits while \(\lambda^*_t\) is left-continuous with right-limits; this becomes important when it comes to correctly evaluating the likelihood of a point process.

When we integrate the conditional intensity up to time \(t\) we obtain another random process.

\begin{definition}[]\protect\hypertarget{def-compensator}{}\label{def-compensator}

The \emph{compensator} of the point process \(N_t\) is defined for \(t \ge 0\) by

\begin{flalign*}
&& \Lambda_t &= \int_0^t \lambda^*_s \, \mathrm{d}s \,. && \diamond
\end{flalign*}

\end{definition}

Indeed \(\Lambda_t\) \emph{characterises} the point process \citep[Proposition 14.1.VI of][]{daley2003b}; it is the unique predictable process for which \(N_t-\Lambda_t\) is a local martingale.
If \(\Lambda_t\) is absolutely continuous with respect to \(t\), then its path-wise derivative is \(\lambda^*_t\) and admits the interpretation of Equation~\ref{eq-conditional-intensity-definition}; see \citet{BrownPollett1982} and the references therein.
Furthermore, \(N_t\) is a Poisson process if and only if \(\Lambda_t\) is a deterministic (non-random) function of \(t\) \citep[see, for example,][]{BrownPollett1991}.

\subsection{The Hawkes process and the self-exciting property}\label{the-hawkes-process-and-the-self-exciting-property}

\citet{hawkes1971spectra} defined the counting process which is now named after him by specifying its conditional intensity process.

\begin{definition}[]\protect\hypertarget{def-hawkes-process}{}\label{def-hawkes-process}

A \emph{Hawkes process} is a counting process \(N_t\) whose conditional intensity process for \(t \ge 0\) is
\begin{equation}\phantomsection\label{eq-hawkes-intensity}{
    \lambda^*_t = \lambda + \sum_{T_i < t} \mu(t - T_i),
}\end{equation}
where \(\lambda > 0\) is the \emph{background arrival rate} and \(\mu: \mathbb{R}_+ \to \mathbb{R}_+\) is the \emph{excitation function}.
\hfill\(\diamond\)

\end{definition}

The excitation function controls how past arrivals will affect the rate of future arrivals.
One simple and quite special excitation function is \(\mu(t) = \alpha \exp( - \beta t )\), i.e.~exponentially decaying intensity.
The top row of Figure~\ref{fig-simulate-hawkes-exp-and-nonlinear} shows a simulated Hawkes process \(N_t\) and the observed exponentially decaying intensity \(\lambda^*_t\).
These subfigures highlight the fundamental Hawkes process idea of \emph{self-excitation}, i.e., that the arrival of an event can trigger more events in the short-term future.
This entails clusters of arrivals more frequently than would be found in a Poisson process.

\begin{figure}[h]
\centering{
\includegraphics[width=\textwidth]{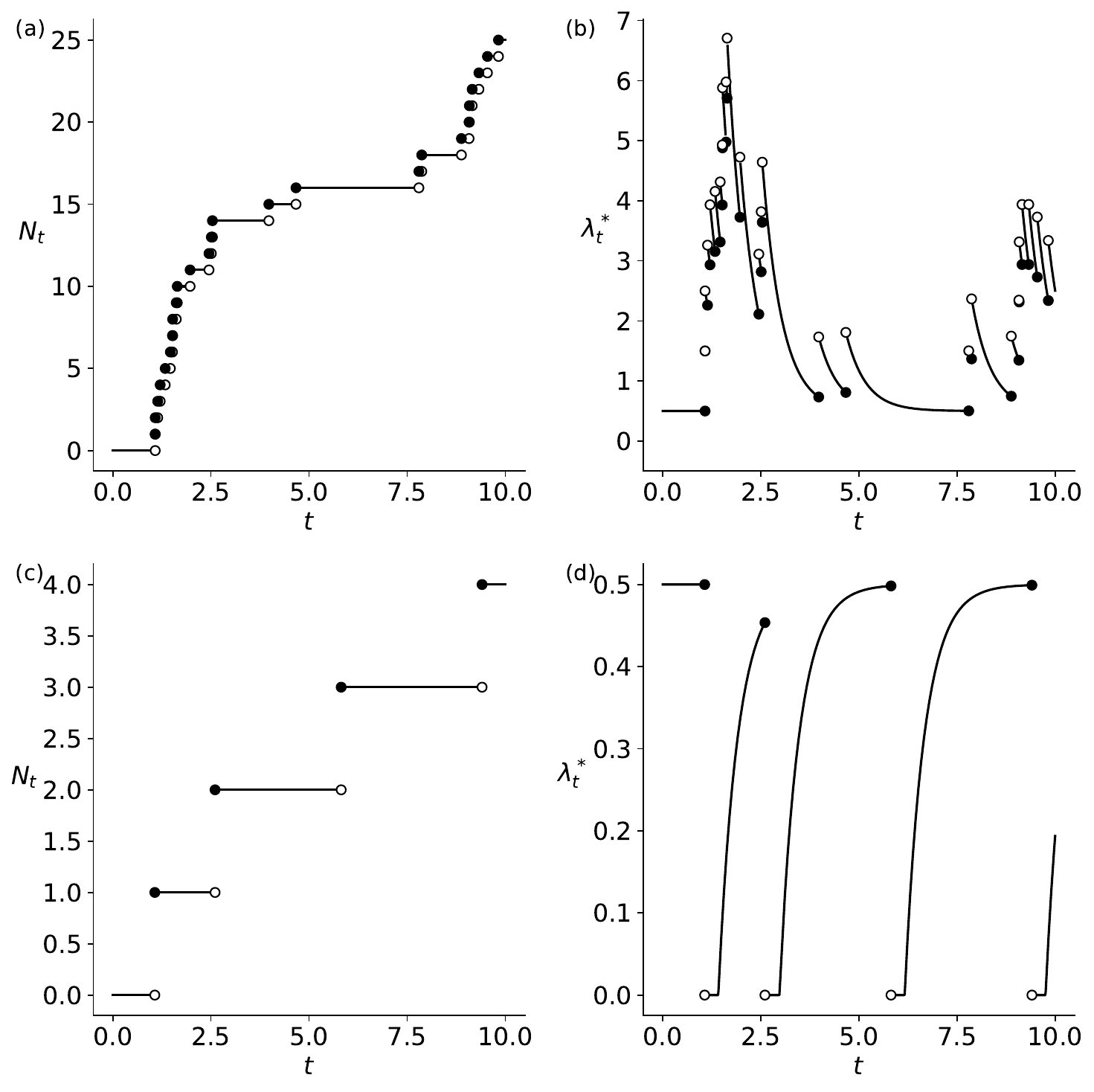}
}
\caption{\label{fig-simulate-hawkes-exp-and-nonlinear}Two Hawkes processes.
The top row shows a self-exciting Hawkes process and the bottom row a self-inhibitory nonlinear Hawkes process (see Section~\ref{sec-nonlinear-hawkes}).
Panels (\emph{a}) and (\emph{c}) show the counting processes \(N_t\) while panels (\emph{b}) and (\emph{d}) show the conditional intensity processes \(\lambda^*_t\).
Both have \(\lambda=\frac12\) and the top process has \(\mu(t) = \exp(-2 t)\) while the bottom process has \(\mu(t) = {-} \exp(-2 t)\) and \(\phi(x) = \max(\lambda + x, 0)\).
}
\end{figure}%

Note that basically every aspect of Definition~\ref{def-hawkes-process} can be or has been generalized in the literature.
The constant background rate \(\lambda\) can be replaced with a time-varying version to account for long-term trends or seasonality \citep[e.g.][]{lewis2012}.
The jumps sizes defined by \(\mu\) which are deterministic can instead be based on random marks (see Section~\ref{sec-marked-hawkes}) or more general stochastic processes \citep[e.g.][]{lee2016hawkes}.
The intensity can be a nonlinear transformation of the right-hand-side of Equation~\ref{eq-hawkes-intensity} (see Section~\ref{sec-nonlinear-hawkes}).
Instead of time \(t \in [0, \infty)\) we can have \(t \in \mathbb{R}\) or even discrete time \(t \in \mathbb{N}_0\) \citep[e.g.][]{white2013a}, see Section~\ref{sec-discrete-time-hawkes}.

\clearpage

\subsection{The immigration-birth view and stationarity}\label{sec-immigration-birth-view}

While the Hawkes process can be understood in terms of its conditional intensity in Equation~\ref{eq-hawkes-intensity}, there's also an intuitive construction of the process as a Poisson cluster process called the \emph{immigration-birth} representation \citep{hawkes1974}.

\begin{multicols}{2}

In this view, we consider a homogeneous Poisson process with rate \(\lambda\) as the \emph{immigration process}.
Immigrants represents exogenous shocks to the system.
An immigrant arriving at time \(s\) then creates a new Poisson process of \emph{births} or \emph{offspring} with intensity \(\mu(t - s)\) for \(t > s\).
Each of these births may then generate more offspring in the same manner recursively, hence we generate the endogenous self-exciting behaviour.
The Hawkes process is then the aggregation of the immigration and these birth processes.
Algorithm~\ref{alg:immig-birth-simulation-compact} shows how to simulate a Hawkes process based on this construction.\footnote{Algorithm~\ref{alg:immig-birth-simulation-compact} should be equivalent to Algorithm~3 in \citet{laub2015hawkes}, however that earlier version has an error as it only generates offspring of immigrants.
  Unfortunately this error was propagated to Algorithm~3 of \citet{laub2022elements} and Algorithm~2 of \citet{worrall2022fifty}.}
It relies on a Poisson process simulator called \(\texttt{SimPoisson}\) which takes an intensity function and a time horizon.

\columnbreak

\begin{algorithm}[H]
\caption{Simulate by Immigrant-Birth}
\label{alg:immig-birth-simulation-compact}
\begin{algorithmic}
\Require $\theta = (\lambda, \mu)$, $T$
\State \texttt{intens}$(s) \gets (s \mapsto \lambda)$
\State \texttt{imm} $\gets$ \Call{SimPoisson}{\texttt{intens}, $(0, T)$}
\Function{Births}{$t$, $T$, $\mu$}
\State \texttt{intens}$(s) \gets (s \mapsto \mu(s-t))$
\State \texttt{offsp} $\gets$ \Call{SimPoisson}{\texttt{intens}, $(t, T)$}
\For{$t_j$ in \texttt{offsp}}
\State \texttt{offsp} $\gets$ \texttt{offsp} $\cup$ \Call{Births}{$t_j, T, \mu$}
\EndFor
\State \Return \texttt{offsp}
\EndFunction

\State \texttt{arr} $\gets \emptyset$
\For{$t_i$ in \texttt{imm}}
\State \texttt{arr} $\gets$ \texttt{arr} $\cup \, \{ t_i \} \, \cup$ \Call{Births}{$t_i$, $T$, $\mu$}
\EndFor

\State \Return Sort(\texttt{arr})
\end{algorithmic}
\end{algorithm}

\end{multicols}

The immigration-birth representation means that we can analyse the Hawkes process using the machinery of branching processes.
An arrival at time \(s\) will generate \(\mathsf{Poisson}(\eta)\) number of first-generation offspring over \(t \ge s\).
This \(\eta\), called the \emph{branching ratio}, is defined by
\[
    \eta := \int_{s}^\infty \mu(t - s) \, \mathrm{d}t  = \int_0^\infty \mu(t) \, \mathrm{d}t \,.
\]
Each of these (on average) \(\eta\) offspring will then generate their own offspring.
The expected number of offspring in the \(k\)-th generation is then \(\eta^k\), hence the expected number of offspring in all generations is \(\sum_{k=1}^\infty \eta^k = \eta/(1 - \eta)\) if \(\eta < 1\).
The condition \(\eta < 1\) is necessary and sufficient for the process to be stationary and have finite mean.
This ameliorates the concern that the self-exciting nature of the Hawkes process may lead to an explosion of arrivals (i.e.~that \(\mathbb{E}[N_t] = \infty\) for some finite time \(t\)).
For example, if \(\mu(t) = \alpha \exp(-\beta t)\) then the expected number of direct offspring \(\eta = \alpha / \beta\) and hence \(\alpha < \beta\) is required to ensure stationarity.

The branching ratio \(\eta\) permeates the Hawkes process literature.
It is in many limit theorems such as the law of large numbers \citep{daley2003a}
\[
    \lim_{t \to \infty} \frac{N_t}{t} = \frac{\lambda}{1 - \eta} \quad \text{a.s.}
\]
As \(\mu(t) = \eta \, \nu(t)\) where \(\nu\) is a probability density function (p.d.f.), then it is (arguably) more interpretable to model \(\eta\) separately from \(\nu\) \citep[as in][]{bacry2020sparse,porter2012self}.

\subsection{Excitation functions and Markov analysis}\label{excitation-functions-and-markov-analysis}

The choice of which parametric form for the excitation function \(\mu\) is a key decision when modelling using a Hawkes process.
The function can be chosen to incorporate some desired characteristics such as a delayed impact of an arrival on the intensity; this can be done for \(\mu(t) = \eta \nu(t)\) with \(\nu(t)\) set as the p.d.f. of a gamma distribution \citetext{\citealp[cf.][]{cui2022moments}; \citealp[insurance applications in][]{lesage2022hawkes}}.
Or \(\mu\) can be chosen to encode domain-specific expertise, as in the following example.

\begin{definition}[]\protect\hypertarget{def-omori-utsu-law}{}\label{def-omori-utsu-law}

The 130~year-old \emph{Omori law} in seismology models an earthquake's aftershock rate as approximately \(K (t + c)^{-1}\) (\(K, c > 0\)) \citep{omori1894, utsu1995centenary}.
\citet{utsu1961statistical} created the \emph{modified Omori law} \(\mu(t) = K (t + c)^{-p}\) (\(p > 1\)) --- also called the \emph{Omori--Utsu law} --- which is still used in modern earthquake models (e.g.~in Definition~\ref{def-etas}'s ETAS model).
The integral of the Omori--Utsu law gives \(\eta = K c^{1-p} / (p-1)\).

This \(\mu\) is also called the \emph{power-law kernel} since its normalized form \(\nu(t) = \frac{p-1}{c} (1 + \frac{t}{c})^{-p}\) is a variant of the \(\mathsf{Pareto}(p-1)\) distribution.
This famously heavy-tailed distribution may seem inappropriate to those outside of seismology, though \citet{utsu1995centenary} states that ``aftershock activities following large earthquakes continue ten years or more.''
\hfill\(\diamond\)

\end{definition}

However the case of the exponentially decaying \(\mu(t) = \alpha \exp(-\beta t)\) is by far the most common choice in the literature.
The exponential decay is special because \((N_t, \lambda^*_t)\) is a \emph{Markov process} in this case \citep[Remark 1.22 of][]{liniger2009}.
In particular, between jumps \(T_n \le t < T_{n+1}\) the exponentially decaying intensity is
\begin{equation}\phantomsection\label{eq-exp-markov-decay}{
    \lambda^*_t = \lambda + (\lambda^*_{T_n+} - \lambda) \, \mathrm{e}^{-\beta (t - T_n)} = \lambda + (\lambda^*_{T_n} + \alpha - \lambda) \, \mathrm{e}^{-\beta (t - T_n)} \,.
}\end{equation}
This joint Markovian property greatly simplifies inference, simulation, and analysis of the Hawkes process, and explains why the exponential decay is, in some sense, the default choice for \(\mu\).
Other recursive forms for \(\lambda^*_t\) are available in certain cases, e.g.~for \(\mu(t) = \sum_{i=1}^p \alpha_i t^i \exp(-\beta t)\) \citep{ozaki1979generalized, ogata1981}.

Given an exponentially decaying \(\mu\), the first natural extension is to allow the conditional intensity to start at a value \(\lambda_0\) which is different from the background rate \(\lambda\).

\begin{definition}[]\protect\hypertarget{def-exp-hawkes-process}{}\label{def-exp-hawkes-process}

An \emph{exponentially decaying Hawkes process} is a counting process \(N_t\) defined by \(\lambda_0, \lambda, \alpha, \beta > 0\) whose conditional intensity starts at \(\lambda_0\) and for \(t \ge 0\) follows
\[
    \lambda^*_t = \lambda + (\lambda_0 - \lambda) \, \mathrm{e}^{-\beta t} + \sum_{T_i < t} \alpha \, \mathrm{e}^{-\beta (t - T_i)} \,, \quad \text{ hence }
\]
\begin{equation}\phantomsection\label{eq-exp-markov-compensator}{
\begin{aligned}
    \Lambda_t 
    &= \lambda t + \frac{(\lambda_0 - \lambda)}{\beta} (1 - \mathrm{e}^{-\beta t}) + \sum_{T_i < t} \frac{\alpha}{\beta} (1 - \mathrm{e}^{-\beta (t - T_i)}) \,.
\end{aligned}
}\end{equation}
\hfill\(\diamond\)

\end{definition}

While this may seem like a minor adjustment (adding one more parameter), it represents a notable conceptual shift.
The earlier counting processes describe systems which have a clear starting time \(t=0\) and which have been observed continuously since time 0.
However a multitude of data-generating processes either have no clear start time or have not been observed since \(t=0\) (our meteorological records don't include the first hurricane, the first earthquake, etc.).
Definition~\ref{def-exp-hawkes-process} builds on the Markov property to analyse these long-running systems where \(t=0\) now represents the time when records began.

\subsection{Likelihood function}\label{likelihood-based-inference}

The likelihood for any point process parameterized by \(\theta\) with observations \(\{t_1, \dots, t_n\}\) within a time horizon \(0 \le t \le T\) is
\[
    L(\theta \mid t_1, \dots, t_n, T) = \Biggl[ \prod_{i=1}^n \lambda^*_{t_i} \Biggr] \exp\Bigl(-\int_0^T \lambda^*_t \, \mathrm{d}t\Bigr) = \Biggl[ \prod_{i=1}^n \lambda^*_{t_i} \Biggr] \mathrm{e}^{-\Lambda_T} \,, \text{ hence }
\]
\begin{equation}\phantomsection\label{eq-log-likelihood}{
    \ell(\theta \mid t_1, \dots, t_n, T) := \log( L(\theta \mid t_1, \dots, t_n, T ) ) = \sum_{i=1}^n \log \lambda^*_{t_i} - \Lambda_T \,.
}\end{equation}
Thus the log-likelihood for a Hawkes process with a general excitation function is
\begin{equation} \label{eq-hawkes-likelihood}
    \ell = \sum_{i=1}^n \log \Bigl( \lambda + \sum_{t_j < t_i} \mu(t_i - t_j) \Bigr) - \Lambda_T \,.
\end{equation}

\begin{multicols}{2}
Equation~\ref{eq-hawkes-likelihood} for the Hawkes process log-likelihood can be evaluated quite easily, though not necessarily quickly.
It contains a double summation over all pairs of event times \(t_i\) and \(t_j\), leading to \(\mathcal{O}(N_T^2)\) complexity.
For exponentially decaying Hawkes however, Equation~\ref{eq-exp-markov-decay} allows us to compute the intensity at each event time in linear time.
This fast \(\mathcal{O}(N_T)\) method, derived by \citet{ozaki1979}, is shown in Algorithm~\ref{alg:exp-hawkes-likelihood}.

\columnbreak

\begin{algorithm}[H]
\caption{Log-Likelihood}
\label{alg:exp-hawkes-likelihood}
\begin{algorithmic}
\Require $(\lambda_0, \lambda, \alpha, \beta)$, $\{t_1, t_2, \ldots, t_n\}$, $T$
\State $\ell \gets 0$, $t_0 \gets 0$, $\lambda^*_0 \gets \lambda_0$
\State $\Lambda_T \gets \lambda T + \frac{(\lambda_0 - \lambda)}{\beta} (1 - \mathrm{e}^{-\beta T})$
\For{$i \gets 1$ to $n$}
    \State $\lambda^*_{t_i} \gets \lambda + (\lambda^*_{t_{i-1}} + \alpha - \lambda) \, \mathrm{e}^{-\beta (t_i - t_{i-1})}$
    \State $\Lambda_T \gets \Lambda_T + (\alpha/\beta) \, (1 - \mathrm{e}^{-\beta (t_i - t_{i-1})})$
    \State $\ell \gets \ell + \log(\lambda^*_{t_i})$
\EndFor
\State $\ell \gets \ell - \Lambda_T$
\State \Return $\ell$
\end{algorithmic}
\end{algorithm}

\end{multicols}

\subsection{Frequentist inference techniques}\label{sec-inference}

Traditional inference techniques for the Hawkes process include maximum likelihood estimators (MLE), expectation-maximization (EM) algorithms, and the generalized method of moments (GMM).
For the basic Hawkes model, MLEs have become the ``standard method'' \citep{Bacry2012Nonparametric} for estimation, as they are unbiased and are asymptotically efficient.

There are no explicit solutions for the maximizer of the Hawkes log-likelihood in Equation~\ref{eq-hawkes-likelihood}, but we can numerically maximize the likelihood in a process sometimes referred to as \emph{Direct Numerical Maximization (DNM)}.
This can be done with a generic optimization algorithm such as the Nelder--Mead algorithm \citep{nelder1965simplex}, or similar second-order method using the Hessian given by \cite{ozaki1979}.
However, \cite{Veen-Schoenberg:2008} note that ``in cases where the log-likelihood function is extremely flat in the vicinity of its maximum, such optimization routines can have convergence problems and can be substantially influenced by arbitrary choices of starting values.''

The EM algorithm is an iterative technique to numerically maximize the likelihood, though with special properties such as that the likelihood will monotonically increase over the iterations \citep{Dempster77maximumlikelihood}.
It estimates the parameters of the process while trying to infer some latent structure of the data, and for Hawkes processes
the ``EM algorithm for maximum likelihood \dots\ is often used on the branching process representation of the Hawkes process'' \citep{hawkes2021personal}.
\cite{Veen-Schoenberg:2008} give a specific EM method for the spatiotemporal Hawkes model, which is shown in simplified form in \citet[Chapter 6]{laub2022elements} for the basic model (even in simplified form, the notation required to describe it is too much for this paper). 
\cite{lapham2014hawkes} compares DNM to the EM method, and gives a thorough treatment of the merits of each method (recommending DNM for their specific application).
For some variations of the Hawkes model (based on the Poisson cluster interpretation) we are forced to use an EM algorithm; specifically, the renewal Hawkes process requires an EM approach for inference.

This discussion has assumed that we can observe the exact arrival times for the process, but in many applications we only have access to binned data.
Section~\ref{sec-binned-data} is devoted to discussing solutions to this problem.
Specifically, the GMM approach is used in this scenario.

\subsection{Simulation algorithms}\label{simulation-algorithms}

The ability to simulate Hawkes processes allows for the calculation (or approximation) of a great deal of useful quantities, e.g.~prediction intervals, parametric bootstrap estimates, sensitivity analyses.
We have already seen one simulation method, in Algorithm~\ref{alg:immig-birth-simulation-compact}, though that was shown for pedagogical reasons and is not an efficient algorithm.

One widely applicable technique is \emph{Ogata's modified thinning method} \citep{ogata1981} shown in Algorithm~1 in the supplementary materials.
It is a variation of the thinning algorithm for inhomogeneous Poisson processes \citetext{\citealp{lewis1979}; \citealp[cf.~Algorithm 5.12 of][]{kroese2011}}.
This method relies on us supplying upper bounds for \(\lambda^*_t\) between jumps.
These bounds are normally easy to find for linear Hawkes, e.g.~for exponential Hawkes
\(\lambda^*_t \le \lambda^*_{T_n} + \alpha\) for \(T_n \le t < T_{n+1}\).
Even for the nonlinear Hawkes in the bottom row of Figure~\ref{fig-simulate-hawkes-exp-and-nonlinear} we can use thinning since \(\lambda^*_t \le \lambda\) for all~\(t\).
When the intensity is increasing without asymptote however, then a small modification to the thinning method is required \citep[e.g.~as in][]{lee2022exact}.
Similarly, the simulation of mutually-exciting Hawkes processes can be achieved with only a small modification to the thinning method.

As is usual with Hawkes processes, the exponentially-decaying Hawkes process has a much more efficient algorithm than the general case.
The algorithm for exponentially-decaying Hawkes is called the \emph{exact simulation method} \citep{dassios2013}, or it can be seen as applying the \emph{composition method} to the interarrival distribution which is Gompertz--Makeham distributed \citep{pai1997}. See Algorithm~2 in the supplementary materials for the pseudocode.

\subsection{Nonlinear Hawkes}\label{sec-nonlinear-hawkes}

\citet{bremaud1996} generalized the relationship between the sum of excitation functions and the conditional intensity function by defining the following.

\begin{definition}[]\protect\hypertarget{def-nonlinear-hawkes-process}{}\label{def-nonlinear-hawkes-process}

A \emph{nonlinear Hawkes process} is a counting process \(N_t\) whose conditional intensity process for \(t \ge 0\) is
\begin{equation}\phantomsection\label{eq-nonlinear-hawkes-intensity}{
    \lambda^*_t = \phi\Bigl( \sum_{T_i < t} \mu(t - T_i) \Bigr),
}\end{equation}
where \(\phi : \mathbb{R} \to \mathbb{R}_+\) is a (possibly) nonlinear function and \(\mu: \mathbb{R}_+ \to \mathbb{R}\) is the excitation function.
The classical `linear' Hawkes is recovered if \(\phi(x) = \lambda + x\).
\hfill\(\diamond\)

\end{definition}

Adding this nonlinear function increases the flexibility of the Hawkes model quite substantially, to the point that nonlinear Hawkes may not even exhibit the fundamental Hawkes property of self-excitation.
For example, we can choose a \(\phi\) which induces the opposite of self-excitation, a \emph{self-inhibitory} effect where the arrival of an event can suppress future events and make arrivals appear in a more regular clock-like pattern.
One specific form is to let \(\mu(t) = \alpha \exp(-\beta t)\) with \(\alpha < 0\) (and \(\beta > 0\)) which creates negative jumps in the intensity and then choose \(\phi(x) = \max(\lambda + x, 0)\) to ensure non-negativity of \(\lambda^*_t\) \citep[cf.][]{bonnet2021maximum}.
This nonlinear Hawkes process is shown in the bottom row of Figure~\ref{fig-simulate-hawkes-exp-and-nonlinear}, and when contrasted with the self-excited Hawkes in the top row we can clearly see the self-inhibitory effect.

\subsection{Mutually exciting Hawkes}\label{sec-mutually-exciting-hawkes}

The Hawkes process can be extended to the multivariate setting, where we maintain counting processes for multiple different kinds of arrivals.

\begin{definition}[]\protect\hypertarget{def-mutually-exciting-hawkes}{}\label{def-mutually-exciting-hawkes}

Consider \(\bm{N}_t = (N^1_t, \dots, N^d_t)\) as a collection of \(d\) counting processes with \(N^k_t\)'s arrival times denoted by \(T^k_1, T^k_2\), etc.
They are \emph{mutually-exciting Hawkes processes} if \(N^k_t\)'s conditional intensity follows
\[
    \lambda^*_k(t) = \lambda + \sum_{j=1}^d \sum_{T^j_i < t} \mu_{j,k}(t - T^j_i) \quad \text{ for } k=1,\dots,d \,,
\]
where \(\mu_{j,k}(s) \ge 0\).
\hfill\(\diamond\)

\end{definition}

\begin{refremark}
Denote the integrals of the \(\mu_{j,k}\) excitation functions as \(\phi_{j,k} = \int_0^\infty \mu_{j,k}(s) \,\mathrm{d}s\) and the matrix of them as \(\Phi = (\phi_{j,k}) \in \mathbb{R}^{d \times d}\).
The mutually-exciting Hawkes process \(\bm{N}_t\) is stationary (non-explosive) iff the spectral radius of \(\Phi\) is less than one.

\label{rem-mutually-exciting-hawkes-stationarity}

\end{refremark}

The \(N^j_t\) process has either an excitatory effect on \(N^k_t\) (\(\phi_{j,k} > 0\)) or no effect on \(N^k_t\) (\(\phi_{j,k} = 0\)).
The excitatory case \(\phi_{j,k} > 0\) includes self-excitation (\(j=k\)) and mutual excitation (\(j\not=k\)).
Thus the \(\Phi\) matrix is akin to the adjacency matrix of a directed graph with \(d\) nodes, or equivalently it encodes the causal relationships in this network of \(d\) counting processes.
These processes have \(\mathcal{O}(d^2)\) parameters to fit which is often undesirable for many applications.
One theme of Hawkes process scholarship, exemplified by \citet{bacry2020sparse} and \citet{cai2022latent}, has focused on inferring sparse \(\Phi\) matrices.

\section{Modern Hawkes Processes}\label{sec-modern-hawkes}

This largest section examines the newer variants of Hawkes which were not originally covered in our book \citep{laub2022elements}.
We will define each process and describe the key innovations that they bring, and (space-permitting) say some words (and mainly supply references) for each processes' inference, simulation, theoretical results, and so on.

\subsection{Marked Hawkes processes}\label{sec-marked-hawkes}

Marked Hawkes processes are perhaps the simplest extension of the classical Hawkes process framework.
A random variable called a `mark' is associated with each arrival in the underlying process.
Canonical examples include earthquakes and their magnitudes, financial transactions and their volumes, and insurance claims and their sizes.

Let \(M_i\in \mathcal{M}\) be the mark associated with the arrival at time \(T_i\).
Let us say that \(N([s, t), \mathcal{A})\) counts the number of arrivals with marks in \(\mathcal{A} \subset \mathcal{M}\) from time \(s\) up to \(t\).
Then our Definition~\ref{def-conditional-intensity-function} for the conditional intensity process becomes
\[
    \lambda^*(t, m) = \lim_{\Delta t, \Delta m \searrow 0} \frac{\mathbb{E}\bigl[ N\bigl( [t, t+\Delta t), [m + \Delta m) \bigr) \,\mid\, \boldsymbol{\mathcal{H}}_t \bigr]}{ \bigl| \Delta m \bigr| \bigl| \Delta t \bigr|} \,,
\]
if this limit exists (now \(\boldsymbol{\mathcal{H}}_t\) is the filtration generated by \(\{ (T_i, M_i) : T_i < t \}\)).
Typically this is written as
\[
    \lambda^*(t, m) = \lambda^*_\mathrm{g}(t) \, f^*(m \,|\, t) \,,
\]
where \(\lambda^*_\mathrm{g}(t)\) is the intensity of the underlying `ground' point process (conditioned on \(\boldsymbol{\mathcal{H}}_t\), so it may depend on the marks) and \(f^*(m \,|\, t)\) is the density of the mark \(m\) at time \(t\) (also conditioned on \(\boldsymbol{\mathcal{H}}_t\)), cf.~\citet[Eq.~7.3.3]{daley2003a}.
The likelihood function \citep{rasmussen2013, JosephJain2024} is then
\[
    L(\theta \mid (t_1, m_1), \dots, (t_n, m_n), T) = \Biggl[ \prod_{i=1}^n \lambda^*(t_i, m_i) \Biggr] \exp\Bigl(- \int_0^t \int_{\mathcal M} \lambda^*(s,m) \,\mathrm{d}m \,\mathrm{d}s \Bigr) \,.
\]
Numerical methods for maximising this likelihood have so far been found to be unstable, leading to investigation of expectation--maximisation procedures \citep{Veen-Schoenberg:2008} or Bayesian methods \citep{rasmussen2013}.

The simplest case is when the marks are modelled as completely independent of the arrival process.
This is traditionally the case in actuarial claims modelling, where the claim severity (marks) are i.i.d.\ and have no impact on the claim frequency (arrival process), cf.~the Cramér--Lundberg model \citep{cramer1930mathematical, lundberg1903approximerad}.
In this case the simulation and inference is straightforward; the Hawkes arrivals can be simulated/fitted as usual and then the marks can be simulated/fitted independently \citep[see e.g.][]{MollerRasmussen2006, JosephJain2024}.

The marks are called \emph{unpredictable} \citep[Definition 6.4.III]{daley2003a} if the marks are independent of the history ($f^*(m \,|\, t) = f(m \,|\, t)$) though they impact the underlying Hawkes intensity process.
Definition~\ref{def-etas}'s ETAS model is an example of this style of marked Hawkes process, and the next two definitions give the necessary context for ETAS's definition.

\begin{definition}[]\protect\hypertarget{def-gutenberg-richter-law}{}\label{def-gutenberg-richter-law}

Large magnitude earthquakes occur less frequently than smaller ones.
The \emph{Gutenberg--Richter} law in seismology observes that the magnitude of earthquakes can be modelled well by a shifted exponential distribution with p.d.f.
\[
    f(m) = \beta \mathrm{e}^{-\beta (m - m_0)} \quad \text{for } m \ge m_0,
\]
where \(\beta > 0\) and \(m_0 > 0\) is the minimum magnitude which can be detected.
\hfill\(\diamond\)

\end{definition}

\begin{definition}[]\protect\hypertarget{def-utsu-law}{}\label{def-utsu-law}

Large magnitude earthquakes tend to create more aftershocks than smaller ones.
The \emph{Utsu aftershock productivity law} \citep{utsu1969aftershocks} models the expected number of aftershocks (of at least magnitude \(m_0 > 0\)) following an earthquake of magnitude \(m\) to be
\[
    \eta(m) = A \mathrm{e}^{ \alpha (m - m_0) } \quad \text{for } m \ge m_0
\]
for some \(A, \alpha > 0\).
\hfill\(\diamond\)

\end{definition}

\begin{definition}[]\protect\hypertarget{def-etas}{}\label{def-etas}

The class of \emph{(temporal) epidemic-type aftershock sequence (ETAS)} models \citep{ogata1988} are marked Hawkes processes where the mark \(M_i\) represents the magnitude of the \(i\)-th earthquake.
Each earthquake generates aftershocks according to Definition~\ref{def-utsu-law}, so the ground intensity process is
\[
    \lambda^*_\mathrm{g}(t) = \lambda + \sum_{T_i < t} \eta(M_i) \, \nu(t - T_i) \,,
\]
where \(\nu\) is a p.d.f.
Specifically, the ETAS model uses
\[
\begin{aligned}
    \lambda^*(t, m) 
    &= f(m) \bigl[ \lambda + \sum_{T_i < t} \eta(M_i) \nu(t - T_i) \bigr] \\
    &= \beta \mathrm{e}^{-\beta (m - m_0)} \Bigl[ \lambda + \sum_{T_i < t} A \mathrm{e}^{ \alpha (M_i - m_0)} \frac{p-1}{c} \Bigl(1 + \frac{t - T_i}{c} \Bigr)^{-p} \Bigr] \,,
\end{aligned}
\]
for a given \(\lambda, A, \alpha, c > 0, p > 1\).
This means that the ETAS model satisfies all three empirical laws mentioned earlier: the Omori--Utsu law (Definition~\ref{def-omori-utsu-law}), the Gutenberg--Richter law (Definition~\ref{def-gutenberg-richter-law}), and the Utsu aftershock productivity law (Definition~\ref{def-utsu-law}).
It also uses \(f^*(m \,|\, t) = f(m)\), meaning that the magnitudes are modelled as independent of the history and of time.
\hfill\(\diamond\)

\end{definition}

The marks can be univariate or multivariate.
When multivariate, the overall process is what \citet{liniger2009} calls a `pseudo-multivariate' process since the arrival of multivariate marks may look like a multivariate process, but the underlying arrival process is still univariate (in contrast to the truly multivariate `mutually exciting' case in Definition~\ref{def-mutually-exciting-hawkes}).

Several examples mentioned later employ marks, for example the dynamic contagion process (see Section~\ref{sec-dynamic-contagion-process}) and spatiotemporal Hawkes processes (see Section~\ref{sec-spatiotemporal-hawkes}).

\subsection{Spatiotemporal Hawkes processes}\label{sec-spatiotemporal-hawkes}

In many systems of interest, the arrivals being counted occur at a particular time but also in space at some specific location.
Let's say that \(\bm{S}_i \in \mathbb{R}^d\) represents the spatial location of the arrival at \(T_i\).
Then the \emph{spatiotemporal counting process} \(N([s, t), \mathcal{R})\) is the number of arrivals in the spatial region \(\mathcal{R} \subset \mathbb{R}^d\) from time \(s\) up to \(t\).
If we let \(B(\bm{s}, r) \subset \mathbb{R}^d\) be a ball of radius \(r\) centred at \(\bm{s}\), then our Definition~\ref{def-conditional-intensity-function} for the conditional intensity process extends to the spatiotemporal setting \citep{reinhart2018review} as
\[
    \lambda^*(t, \bm{s}) = \lim_{\Delta s, \Delta t \searrow 0} \frac{\mathbb{E}\bigl[ N\bigl( [t, t+\Delta t), B(\bm{s}, \Delta s) \bigr) \,\mid\, \boldsymbol{\mathcal{H}}_t \bigr]}{ \bigl| B(\bm{s}, \Delta s) \bigr| \Delta t} \,,
\]
if this limit exists (now \(\boldsymbol{\mathcal{H}}_t\) is the filtration generated by \(\{ (T_i, \bm{S}_i, M_i) : T_i < t \}\)).
This is effectively a special case of the marked Hawkes processes discussed in Section~\ref{sec-marked-hawkes}, though there are some subtleties introduced.

We continue the seismology theme from the previous section.
If a region's seismic activity is observed to be lower than predicted by an earthquake model, a state called \emph{relative quiescence}, then this may be a sign that a large earthquake in the region is forthcoming \citep{ogata1998space}.
The temporal ETAS model from Definition~\ref{def-etas} has been used to this effect, however knowing simply that a large earthquake is imminent is not as useful as being able to predict the likely locations of the earthquake, which is the motivation for the space-time ETAS model below.

\begin{definition}[]\protect\hypertarget{def-spacetime-etas}{}\label{def-spacetime-etas}

The \emph{(space-time) epidemic-type aftershock sequence (ETAS)} model \citep{ogata1998space} are spatiotemporal marked Hawkes processes where the earthquake at time \(T_i\) has location \((X_i, Y_i)\) and magnitude \(M_i\).
The space-time ETAS intensity is
\[
\begin{aligned}
    \lambda^*(t, x, y, m) 
    &= f(m) \lambda^*(t, x, y) \\
    &= f(m) \bigl[ \lambda(x, y) + \sum_{i: T_i < t} \eta(M_i) \nu(t - T_i) g(x - X_i, y - Y_i; M_i) \bigr],
\end{aligned}
\]
where \(\lambda(x, y)\) is a spatially-aware background arrival rate and
\begin{equation}\phantomsection\label{eq-spatial-impact}{ 
    g(x, y; m) = \frac{q - 1}{\pi D \exp( \gamma (m - m_0) )} \bigl( 1 + \frac{x^2 + y^2}{D \exp( \gamma (m - m_0) ) } \bigr)^{-q}
}\end{equation}
is a p.d.f. which localises the impact on the intensity function to be near the arrival location \(X_i, Y_i\).
The functions \(\nu\), \(f\) and \(\eta\) are the same as in Definition~\ref{def-etas} (i.e.~given by Definitions~\ref{def-omori-utsu-law}, \ref{def-gutenberg-richter-law}, and \ref{def-utsu-law}).

\hfill\(\diamond\)

\end{definition}

The form of Equation~\ref{eq-spatial-impact} is flexible; here we show the one used by the R \texttt{ETAS} package \citep{jalilian2019etas}, though \citet{ogata1998space} lists other options.
Similarly the background intensity \(\lambda(x, y)\) can be fitted in many ways, though nonparametric approaches are common.

Since we simply observe earthquakes without knowing if they are immigrant events (called \emph{background} events) or offspring events (called \emph{triggered} events), then fitting the background intensity \(\lambda(x, y)\) while simultaneously finding all the parameters governing self-excitation is a serious challenge.
\citet{zhuang2002stochastic} propose an iterative method where --- given a current guess for \(\lambda(x, y)\) --- for each observation \(i\) we calculate
\[
    \rho_i = \mathbb{P}(\text{Event } i \text{ is a background event}) = \frac{\lambda(x_i, y_i)}{\lambda^*(t_i, x_i, y_i)}
\]
and then randomly assign observations as being background events with probability \(\rho_i\) or triggered events otherwise, and the fitting for \(\lambda(x, y)\) can be updated given this random set of `background events'.
This method is called \emph{stochastic declustering}.

As a demonstration, the Japanese earthquake dataset from \citet{laub2022elements} was fitted to the space-time ETAS model (previously a simple Hawkes process was employed).
Moving from a temporal model to a marked spatiotemporal model involved a significant jump in computing time: from 50 ms up to 2780 minutes (about 3 million \(\times\)).
Figure~\ref{fig-etas} shows the result of the \texttt{ETAS} package's \texttt{rates} function \citep{jalilian2019etas}, as well as a demonstration of \citet{zhuang2002stochastic}'s stochastic declustering method.
Note, it is common to augment \(\boldsymbol{\mathcal{H}}_t\) to include some extra events just outside of the spatiotemporal region of interest \citep[p.~370]{zhuang2002stochastic} to mitigate estimation errors caused by edge effects.

\begin{figure}[H]
    (\textit{a}) \\
    \includegraphics[width=\textwidth]{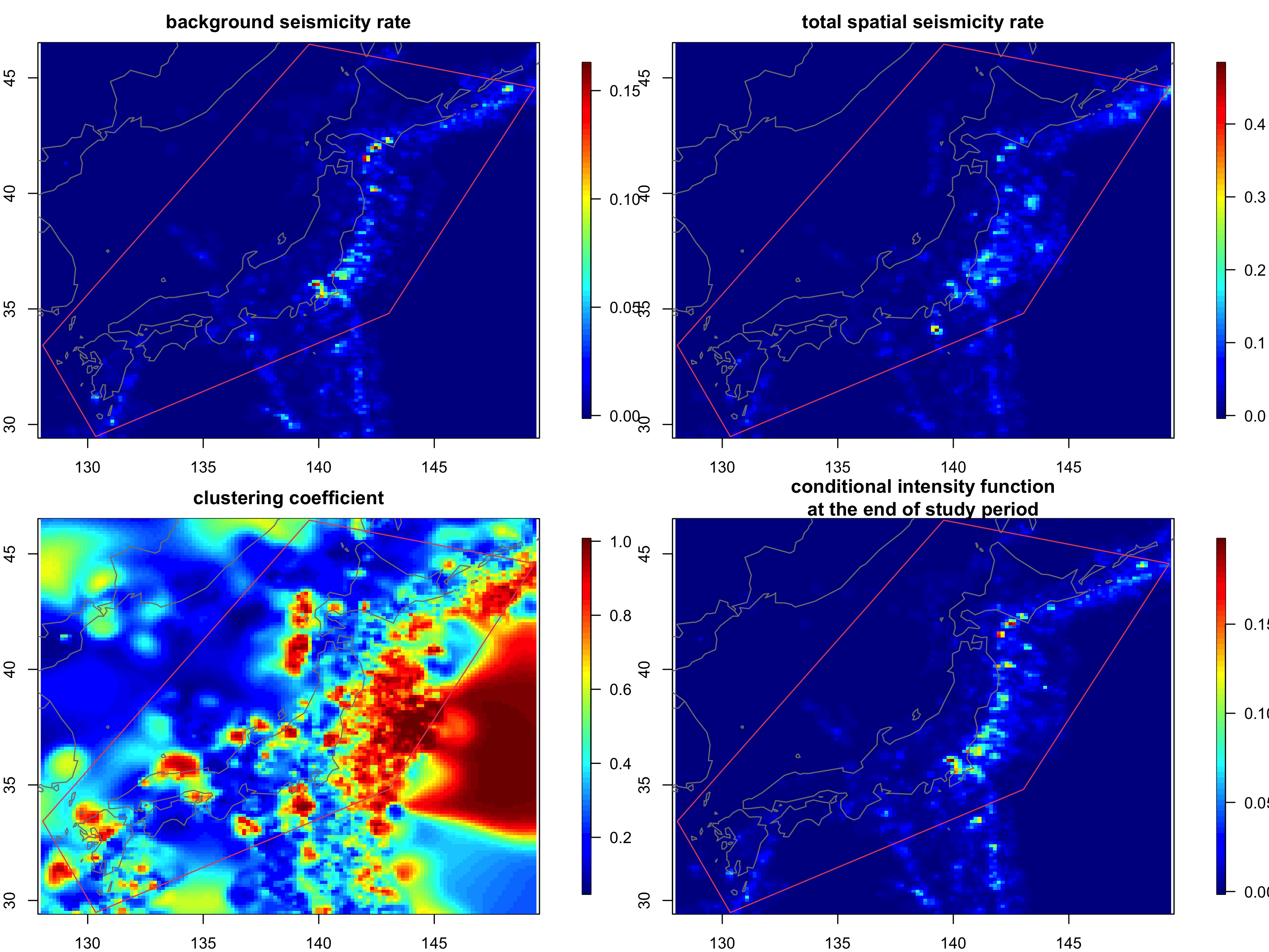}
    (\textit{b}) \\
    \includegraphics[width=\textwidth]{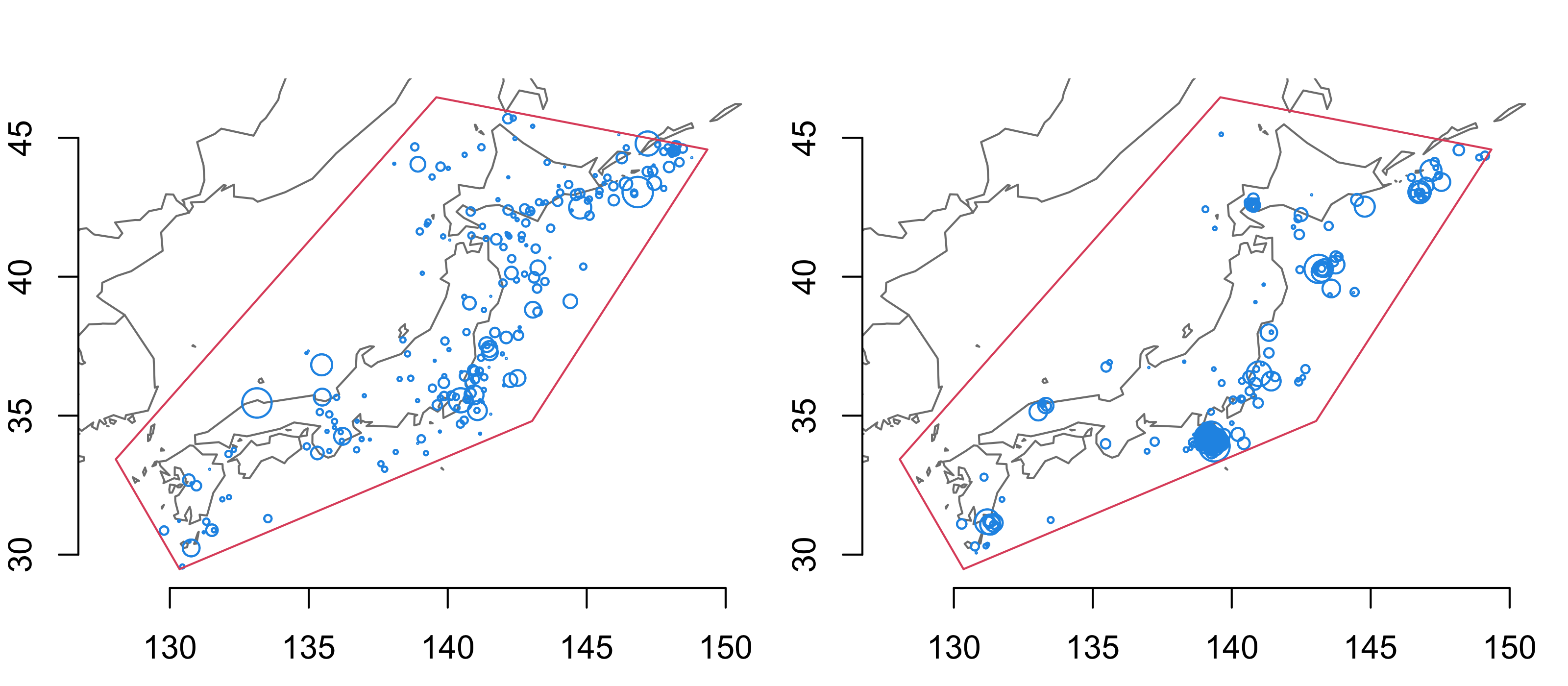}
    \caption{
        Results of the ETAS model analysis on earthquake data.
        Panel (\textit{a}) shows the output of the \texttt{ETAS} package's \texttt{rates} function \citep{jalilian2019etas}, which includes four subplots depicting the background seismicity rate, the total spatial seismicity rate, the clustering coefficient, and the conditional intensity function at the end of the study period.
        These subplots collectively provide insights into the temporal and spatial distribution of seismicity, underscoring regions and times of heightened earthquake likelihood.
        Panel (\textit{b}) illustrates the results of stochastic declustering \citep{zhuang2002stochastic} applied to the earthquake catalog (the slice from the year 2000).
        The left-hand side displays the geographical distribution and magnitude of background (immigrant) earthquakes.
        The right-hand side presents the location and magnitude of triggered (offspring/aftershock) earthquakes, which are directly influenced by preceding seismic activity.
    }
    \label{fig-etas}
\end{figure}

\subsection{Renewal Hawkes processes}\label{renewal-hawkes-processes}

From the immigration-birth representation (Section~\ref{sec-immigration-birth-view}) we can see that the Hawkes process is a \emph{Poisson cluster process}.
That is, the cluster locations (immigrants) arrive by a homogeneous Poisson process, while the clusters (births) are drawn as independent inhomogeneous Poisson processes.
The \emph{renewal Hawkes process} \citep{wheatley2016hawkes} is a natural extension of this idea, where the cluster locations are not drawn from a homogeneous Poisson process but instead from a renewal process (that is, a process with i.i.d. interarrival times which can be non-exponentially distributed).

In this setting, waiting times between immigrants are i.i.d. and are distributed according to some p.d.f. \(g(\cdot)\), which is related to the renewal intensity via
\[
\lambda(w) = \frac{g(w)}{1-\int_0^w g(s) \, \mathrm{d} s} \,.
\]
The conditional intensity function of the renewal Hawkes process is then
\[
\lambda^*_t = \lambda(t-T_{I[N(t)]}) + \sum_{T_i < t} \mu(t-T_i)\,,
\]
where
\[
I[N(t)] = \max\left\{j \in \{1,\dots,N(t)\} \,:\, t_j \in \{T_i^{(0)}\} \right\}
\]
is the index of the most recent immigrant event before \(t\).

The log-likelihood is straightforward to derive and has the same form as that of the standard Hawkes process (cf.~Equation~\ref{eq-log-likelihood}).
One can derive an EM algorithm to maximise this log-likelihood by introducing latent variables representing immigrant status and parent relationships.

The process has many interesting properties, including that the mean of the process \(M(t) = \mathbb{E}[N_t]\) uniquely satisfies the following integral equation \citep{chen2018direct}:
\[
M(t) = \int_0^t \left(1+K(t-s) + M(t-s) \right) \lambda(s) \exp\left(-\int_0^s \lambda(x) \, \mathrm{d} x\right) \mathrm{d} s,
\]
where \(K(t)\) is the expected number of events of the nonstationary process up to time \(t\), which in turn uniquely satisfies the following further integral equation:
\[
K(t) = \int_0^t \left(1 + K(t-s) \right) \mu(s) \, \mathrm{d} s\,.
\]
It is also known how to efficiently evaluate the likelihood for the renewal Hawkes process model \citep[see][]{chen2018direct}.

Much as with the classical Hawkes process model, one can extend the renewal Hawkes process model to the multivariate setting \citep{stindl2018likelihood}.
In this setting, one keeps track of \(M\) types of individuals, and one creates a process for each type, with the feature that immigrants are of the same type, but offspring can be created for any type according to a set of excitation functions.
One can again evaluate the likelihood for this model \citep{stindl2018likelihood}.

Simulation of the renewal Hawkes process \citep{chen2018direct} and its multivariate extension \citep{stindl2018likelihood} is based on their cluster process representations (see also Section~\ref{sec-immigration-birth-view}).

\subsection{Binned Data}\label{sec-binned-data}

When dealing with Hawkes processes, a common challenge arises --- the absence of specific timestamps for events.
Instead, we often only possess information about the frequency of events within a specified interval, while the exact timings remain undisclosed.

This phenomenon is known as \emph{binned data} and is illustrated in Figure~\ref{fig:time_series}.
Binned data is a phenomenon in which the value of an observation is only partially known.
In our case, the times at which events happen are \emph{censored}.

\begin{figure}[h!]
    (\textit{a}) \\
    \includegraphics[width=\columnwidth]{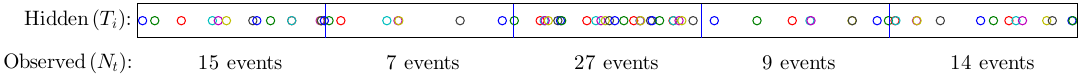}
    (\textit{b}) \\
    \includegraphics[width=\textwidth]{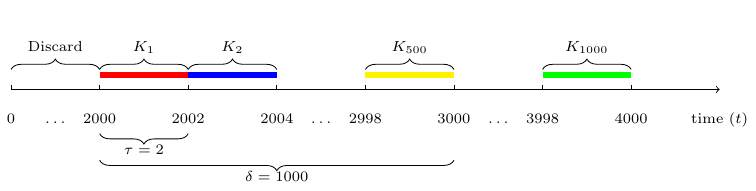}
    \caption{
            Panel (\textit{a}): The event times generated from a Hawkes process.
            The event times are \emph{not observable} (hidden).
            However, the aggregated number of events are \emph{observable}.
            There are 15, 7, 27, 9, and 14 events observed, as indicated.
            The specific times of occurrence for these events are censored, a phenomenon commonly referred to as binned data.
            Panel (\textit{b}): A timeline with \(T=4000\), \(\tau=2\), \(\delta = 1000\) thus giving \(n_\sharp=1000\) intervals and \(n_\ast := n_\sharp - \Delta = 500\) where \(\Delta = \lceil\delta/\tau\rceil =500\).
            For example \(K_1\) denotes the number of events falling in bin \(1\) of length \(\tau=2\).
        }
        \label{fig:time_series}
        \label{fig:line-grap}
\end{figure}

Despite recent studies in the methodology to draw inference in the context of binned data \citep{peto1973experimental, turnbull1976empirical, gentleman1994maximum, bohning1996interval, hu2009generalized}, relatively little work has attempted to model the number of events within a fixed interval using Hawkes processes.
Existing inference techniques on binned data almost always assume the tractability of the likelihood function.

A significant contribution to this field was initially proposed in 2010 by Aït Sahalia and colleagues and subsequently published in \citet{ait2015modeling}.
The work of \citet{dafonseca2014} builds upon this foundation, with the authors examining the functionals related to event counts over an interval for a Hawkes process.

The primary tool employed in drawing inference from binned data relies on the application of the Dynkin's formula within the framework of Hawkes processes.
The Markov property allows one to invoke certain tools to obtain the moments for the number of events in \emph{an interval}, rather than the number of events at the current time \(t > 0\).
Among these tools are the \emph{infinitesimal generator} and martingale techniques.
For a Markov process \(X_t\), consider a function \(f : D \rightarrow \mathbb{R}\).
The infinitesimal generator of the process, denoted \(\mathcal{A}\), is defined by
\[\mathcal{A}f(x) := \lim_{h\to 0} \frac{\mathbb{E}\left[f(X_{t+h})|X_t=x\right]-f(x)}{h} .\]

For every function \(f\) in the domain of the infinitesimal generator, the process
\[
    M_t := f(X_t) - f(X_0) - \int_0^t \mathcal{A} f(X_u) \,\mathrm{d}u
\]
is a martingale \citep{oksendal2007applied}.
Thus, for \(t > s\) we have
\[
    \mathbb{E}\left[ f(X_t) \!-\! \int_0^s \mathcal{A} f(X_u) \,\mathrm{d}u \,\Big|\, \mathcal{H}_s \right] = f(X_s) \!-\! \int_0^s \mathcal{A}f(X_u) \mathrm{d}u
\]
by the martingale property of \(M\).
Rearranging, we finally obtain Dynkin's formula \citep[see][]{medvegyev:2009}:
\begin{equation}\phantomsection\label{eq-dynkin}{
    \mathbb{E}\left[ f(X_t) \,|\, \mathcal{H}_s \right] = f(X_s) + \mathbb{E}\left[ \int_s^t \mathcal{A}f(X_u) \mathrm{d}u \,\Big|\, \mathcal{H}_s \right].
}\end{equation}

Following the concept of moment matching \citep{hall2004generalized}, the primary objective is to compute the theoretical moments and \textit{align} them with the empirical moments in the context of Hawkes processes.
We briefly outline these steps here, following the presentation of \citet{dafonseca2014} and \citet{ait2015modeling} and utilizing the exponentially decaying Hawkes process defined in Definition~\ref{def-exp-hawkes-process}, reiterated below:

\begin{equation}\phantomsection\label{eq-new-intensity}{
    \lambda^{\ast}_t = \lambda + (\lambda_0 - \lambda) \, \mathrm{e}^{-\beta t} + \sum_{T_i<t} \alpha \, \mathrm{e}^{-\beta (t-T_i)}.
}\end{equation}

Note that, according to Equation~\ref{eq-new-intensity}, the impact of the second term on \(\lambda^{\ast}_t\) diminishes exponentially as time passes.

\textbf{Theoretical moments.}
With these expressions at our disposal, we derive the following three quantities which will be useful in the inference for Hawkes processes.

\begin{proposition}[]\protect\hypertarget{prp-expectation-main}{}\label{prp-expectation-main}

The long term expectation of the number of jumps over an interval \(\tau\) is given by
\begin{equation}\phantomsection\label{eq-m1}{
    w^{\tau}_1 := \lim_{t\to\infty} \mathbb{E}\left[N_{t+\tau}-N_t \,|\, \lambda_0\,\right]=\frac{\lambda\beta}{\beta-\alpha}\tau.
}\end{equation}

\end{proposition}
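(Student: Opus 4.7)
The plan is to exploit the Markov property of $(N_t,\lambda^*_t)$ that was highlighted in Section~\ref{excitation-functions-and-markov-analysis}, and to compute the stationary mean intensity via Dynkin's formula (Equation~\ref{eq-dynkin}), in the spirit of the inference framework being introduced. First I would write down the infinitesimal generator of the two-dimensional Markov process. Between jumps, $\lambda^*_t$ satisfies $\mathrm{d}\lambda^*_t = -\beta(\lambda^*_t-\lambda)\,\mathrm{d}t$, and at a jump $\lambda^*_t$ increases by $\alpha$ while $N_t$ increases by $1$. For a smooth bounded $f(v)$ this gives
\[
\mathcal{A}f(v) \;=\; -\beta(v-\lambda)\,f'(v) + v\bigl[f(v+\alpha) - f(v)\bigr].
\]

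Next, I would apply $\mathcal{A}$ to the identity $f(v)=v$ to obtain the clean expression $\mathcal{A}f(v) = \beta\lambda - (\beta-\alpha)v$. Writing $m(t) := \mathbb{E}[\lambda^*_t\mid\lambda_0]$ and applying Dynkin's formula (with $s=0$, $X_t=\lambda^*_t$) yields the linear integral equation
\[
m(t) \;=\; \lambda_0 + \int_0^t \bigl[\beta\lambda - (\beta-\alpha)\, m(u)\bigr] \,\mathrm{d}u,
\]
equivalently the ODE $m'(t) = \beta\lambda - (\beta-\alpha)\,m(t)$ with $m(0)=\lambda_0$. Solving this standard first-order linear ODE gives
\[
m(t) \;=\; \frac{\beta\lambda}{\beta-\alpha} + \left(\lambda_0 - \frac{\beta\lambda}{\beta-\alpha}\right)\mathrm{e}^{-(\beta-\alpha)t},
\]
so under the stationarity condition $\alpha<\beta$ (i.e.\ $\eta<1$) one has $m(t)\to\beta\lambda/(\beta-\alpha)$ as $t\to\infty$.

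To translate this into a statement about the counting process, I would invoke the compensator identity $\mathbb{E}[N_{t+\tau}-N_t\mid\lambda_0] = \mathbb{E}\!\left[\int_t^{t+\tau}\lambda^*_s\,\mathrm{d}s \,\big|\,\lambda_0\right] = \int_t^{t+\tau} m(s)\,\mathrm{d}s$, which follows from the local martingale property of $N_t-\Lambda_t$ recalled after Definition~\ref{def-compensator} (Fubini is justified since $m$ is bounded on compact intervals). Since the exponential correction term in $m(s)$ decays to zero as $s\to\infty$, dominated convergence yields $\int_t^{t+\tau} m(s)\,\mathrm{d}s \to \beta\lambda\tau/(\beta-\alpha)$, which is the required limit.

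The main obstacle I anticipate is bookkeeping rather than depth: one must be careful that $f(v)=v$ is actually in the domain of the generator (it is unbounded, so strictly speaking one should either truncate with a cutoff and pass to the limit, or invoke the stronger version of Dynkin's formula valid for locally integrable generators, using the fact that $\lambda^*_t$ has finite moments whenever $\alpha<\beta$). A secondary concern is the justification of the Fubini interchange and the passage to the limit $t\to\infty$ inside the integral, but both follow from the explicit exponential bound on $m(t)-\beta\lambda/(\beta-\alpha)$ obtained above.
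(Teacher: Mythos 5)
Your proposal is correct and follows essentially the same route as the paper, which itself only remarks that the result is ``a simple application of Dynkin's formula'' for the Markov process $(N_t,\lambda^*_t)$; you have simply filled in the details (generator, ODE for $m(t)=\mathbb{E}[\lambda^*_t\mid\lambda_0]$, compensator identity, and the limit $m(t)\to\beta\lambda/(\beta-\alpha)$ under $\alpha<\beta$). The computations check out and the technical caveats you flag (domain of the generator, Fubini) are handled appropriately.
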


\begin{proposition}[]\protect\hypertarget{prp-variance-main}{}\label{prp-variance-main}

The long term variance of the number of jumps over an interval \(\tau\) is given by
\begin{equation}\phantomsection\label{eq-m2}{
\begin{aligned}
    w^{\tau}_2 
    :=& \lim_{t\to\infty} \mathbb{E}\left[(N_{t+\tau}-N_t)^2 \,|\, \lambda_0\,\right]-
(\mathbb{E}\left[N_{t+\tau}-N_t \,|\, \lambda_0\,\right])^2 \\
    =& \frac{\lambda\beta}{\beta-\alpha}\left(\tau\frac{\beta^2}{(\beta-\alpha)^2}+\left(1-\frac{\beta^2}{(\beta-\alpha)^2}\right)\frac{1-\mathrm{e}^{-(\beta-\alpha)\tau}}{\beta-\alpha}\right).
\end{aligned}
}\end{equation}

\end{proposition}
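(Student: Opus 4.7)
The plan is to use the Markov structure of the pair $(N_t, \lambda^*_t)$ (cf.~Equation~\ref{eq-exp-markov-decay}) and apply Dynkin's formula (Equation~\ref{eq-dynkin}) to a short list of polynomial test functions. Writing $\bar\lambda$ for the background rate (the paper's ``$\lambda$'') and $\lambda_\infty := \bar\lambda \beta / (\beta-\alpha)$, the infinitesimal generator acts on smooth $f(n, \lambda)$ via
\[
    \mathcal{A}f(n, \lambda) = -\beta(\lambda - \bar\lambda) \, \partial_\lambda f(n, \lambda) + \lambda \bigl[ f(n+1, \lambda + \alpha) - f(n, \lambda) \bigr],
\]
encoding the deterministic exponential pull of $\lambda^*_t$ toward $\bar\lambda$ between jumps together with the simultaneous increments of $+1$ in $N$ and $+\alpha$ in $\lambda^*$ at each jump. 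Applying Dynkin in turn to $f \in \{\lambda, \lambda^2, n, n\lambda, n^2\}$ produces a triangular system of linear first-order ODEs in $s$ for $\mathbb{E}[\lambda^*_s]$, $\mathbb{E}[(\lambda^*_s)^2]$, $\mathbb{E}[N_s]$, $\mathbb{E}[N_s \lambda^*_s]$, and $\mathbb{E}[N_s^2]$; the triangularity is essential and is what makes the closed-form computation tractable.

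The next step is a reduction via the strong Markov property: conditionally on $\lambda^*_t$, the increment $N_{t+\tau} - N_t$ is distributed as $N_\tau$ for a fresh process with $N_0 = 0$ and initial intensity $\lambda^*_0 = \lambda^*_t$. Under the stability hypothesis $\alpha < \beta$, $\lambda^*_t$ converges as $t \to \infty$ to a stationary law $\pi$, whose first two moments are read off from the stationary solutions of the ODEs for $\mathbb{E}[\lambda^*_s]$ and $\mathbb{E}[(\lambda^*_s)^2]$; so the $\lim_{t \to \infty}$ computation reduces to taking expectations under this stationary initial condition. Under that initialization, $\mathbb{E}[\lambda^*_s] \equiv \lambda_\infty$ and $\mathbb{E}[(\lambda^*_s)^2]$ are constant in $s$, which immediately recovers $w_1^\tau = \lambda_\infty \tau$ (Proposition~\ref{prp-expectation-main}) from $\mathbb{E}[N_\tau] = \int_0^\tau \mathbb{E}[\lambda^*_s] \, \mathrm{d}s$.

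I would then solve the remaining linear ODEs sequentially. The equation for $b(s) := \mathbb{E}[N_s \lambda^*_s]$ reduces to $b'(s) + (\beta-\alpha)b(s) = \lambda_\infty^2 (\beta-\alpha) \, s + C$ with $b(0) = 0$, where $C$ is an explicit constant built from the stationary value of $\mathbb{E}[(\lambda^*_s)^2]$ plus $\alpha \lambda_\infty$; its solution takes the form $b(s) = \lambda_\infty^2 \, s + B\bigl(1 - \mathrm{e}^{-(\beta-\alpha)s}\bigr)$ for an explicit $B$. Integrating $\frac{\mathrm{d}}{\mathrm{d}s}\mathbb{E}[N_s^2] = 2b(s) + \lambda_\infty$ over $[0, \tau]$, subtracting $(\lambda_\infty \tau)^2$, and using the algebraic identity $\beta^2/(\beta-\alpha)^2 - 1 = 2B/\lambda_\infty$ consolidates the answer into the compact form stated in the proposition. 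The main obstacle is purely algebraic bookkeeping: several constants arising from the stationary second moment of $\lambda^*$ must combine exactly with $\alpha \lambda_\infty$ and $\lambda_\infty^2$ to reproduce the factor $\bigl(1 - \beta^2/(\beta-\alpha)^2\bigr)$ multiplying the transient term, but no deeper difficulty arises once the generator and the strong Markov reduction are in hand.
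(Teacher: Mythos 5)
Your proposal is correct and is essentially the paper's own argument: the paper's proof consists of the single remark that the result is ``a simple application of Dynkin's formula as in Equation~\ref{eq-dynkin}'', and your generator for $(N_t,\lambda^*_t)$, the triangular ODE system from the test functions $\{\lambda,\lambda^2,n,n\lambda,n^2\}$, and the reduction to the stationary initial law are precisely the standard execution of that remark (as in the cited Da~Fonseca--Zaatour derivation). The constants do check out --- with $\kappa=\beta-\alpha$ the stationary variance of $\lambda^*$ is $\alpha^2\lambda_\infty/(2\kappa)$, giving $B=\alpha(2\beta-\alpha)\lambda_\infty/(2\kappa^2)$ and hence $2B/\lambda_\infty=\beta^2/\kappa^2-1$ as you claim, which reproduces Equation~\ref{eq-m2}.
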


\begin{proposition}[]\protect\hypertarget{prp-covariance-main}{}\label{prp-covariance-main}

The long term covariance of the number of jumps over an interval \(\tau\) and lag \(\delta\) is given by
\begin{equation}\phantomsection\label{eq-m3}{
\begin{aligned}
    w^{\tau}_3
    :=& \lim_{t\to\infty} \mathbb{E}\left[(N({t+\tau})-N(t))(N(t+2\tau+\delta)-N(t+\tau+\delta)) \,|\, \lambda_0\,\right] \\
      & \quad-\mathbb{E}\left[(N({t+\tau})-N(t))\,|\, \lambda_0\,\right]\mathbb{E}\left[(N(t+2\tau+\delta) \,|\, \lambda_0\,\right] \\
    =&  \frac{\lambda\beta\alpha(2\beta-\alpha)( \mathrm{e}^{-(\beta-\alpha)\tau}-1)^2}{2(\beta-\alpha)^4} \mathrm{e}^{-(\beta-\alpha)\delta}.
\end{aligned}
}\end{equation}

\end{proposition}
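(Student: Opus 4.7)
The plan is to exploit the Markovian structure of $(N_t,\lambda^*_t)$ for the exponentially decaying Hawkes process, whose infinitesimal generator acts on sufficiently regular $f(n,\ell)$ as $\mathcal{A}f(n,\ell) = -\beta(\ell-\lambda)\,\partial_\ell f(n,\ell) + \ell\bigl[f(n+1,\ell+\alpha) - f(n,\ell)\bigr]$, and to peel the target covariance inwards by two successive applications of the tower property, invoking Dynkin's formula (Equation~\ref{eq-dynkin}) at each stage. Three ingredients suffice. (a) The stationary mean $\bar\lambda := \beta\lambda/(\beta-\alpha)$ and variance $\sigma^2_\infty := \alpha^2\beta\lambda/[2(\beta-\alpha)^2]$ of $\lambda^*_t$, obtained by setting $\mathbb{E}[\mathcal{A}f]=0$ in stationarity for $f(\ell)=\ell$ and $f(\ell)=\ell^2$. (b) The conditional drift $\mathbb{E}[\lambda^*_{s+u}\mid\mathcal{H}_s] = \bar\lambda + (\lambda^*_s-\bar\lambda)\mathrm{e}^{-(\beta-\alpha)u}$, from Dynkin applied to $f(\ell)=\ell$ and the resulting linear ODE in $u$. (c) By integrating (b), the linear-in-intensity conditional increment identity $\mathbb{E}[N_{s+\tau}-N_s\mid\mathcal{H}_s] = \bar\lambda\tau + (\lambda^*_s-\bar\lambda)(1-\mathrm{e}^{-(\beta-\alpha)\tau})/(\beta-\alpha)$, which reduces to Proposition~\ref{prp-expectation-main} upon taking expectations.

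Next, setting $s := t+\tau+\delta$, I would peel the covariance in two steps. Conditioning the right-hand factor of $w_3^\tau$ on $\mathcal{H}_s$ and applying (c) gives $w_3^\tau = \lim_{t\to\infty}\tfrac{1-\mathrm{e}^{-(\beta-\alpha)\tau}}{\beta-\alpha}\,\mathrm{Cov}(N_{t+\tau}-N_t,\lambda^*_s)$; then conditioning $\lambda^*_s$ on $\mathcal{H}_{t+\tau}$ and applying (b) with $u=\delta$ gives $\mathrm{Cov}(N_{t+\tau}-N_t,\lambda^*_s) = \mathrm{e}^{-(\beta-\alpha)\delta}\,\mathrm{Cov}(N_{t+\tau}-N_t,\lambda^*_{t+\tau})$. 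Everything thus collapses to the single stationary mixed covariance $C(\tau) := \lim_{t\to\infty}\mathrm{Cov}(N_{t+\tau}-N_t,\lambda^*_{t+\tau})$.

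To obtain $C(\tau)$ -- the only genuinely new computation -- I would apply Dynkin to $f(n,\ell)=n\ell$ on the shifted Markov process $(N_{t+\tau'}-N_t,\lambda^*_{t+\tau'})_{\tau'\ge 0}$ started at $(0,\lambda^*_t)$. A short calculation yields $\mathcal{A}f(n,\ell) = -(\beta-\alpha)n\ell + \beta\lambda n + \ell^2 + \alpha\ell$, and taking the $t\to\infty$ limit produces the first-order linear ODE $h'(\tau') = -(\beta-\alpha)h(\tau') + \beta\lambda\bar\lambda\tau' + (\sigma^2_\infty+\bar\lambda^2) + \alpha\bar\lambda$ for $h(\tau') := \lim_{t\to\infty}\mathbb{E}[(N_{t+\tau'}-N_t)\lambda^*_{t+\tau'}]$ with $h(0)=0$. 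Centering via $C(\tau')=h(\tau')-\bar\lambda^2\tau'$ and invoking $(\beta-\alpha)\bar\lambda=\beta\lambda$ cancels the linear-in-$\tau'$ forcing, leaving $C(\tau) = (\sigma^2_\infty+\alpha\bar\lambda)(1-\mathrm{e}^{-(\beta-\alpha)\tau})/(\beta-\alpha)$. Multiplying the three factors and simplifying $\sigma^2_\infty + \alpha\bar\lambda = \alpha\beta\lambda(2\beta-\alpha)/[2(\beta-\alpha)^2]$ (using $(1-\mathrm{e}^{-x})^2 = (\mathrm{e}^{-x}-1)^2$) recovers Equation~\ref{eq-m3}.

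The main obstacle I anticipate is the bookkeeping in the forcing term of the ODE for $h$: it simultaneously mixes the stationary first and second moments of $\lambda^*$, and the collapse to the clean prefactor $\alpha(2\beta-\alpha)$ is an algebraic coincidence driven by the cancellation $(\beta-\alpha)\bar\lambda^2 = \beta\lambda\bar\lambda$ together with the specific form of $\sigma^2_\infty$. A useful cross-check route that bypasses this altogether is to use the martingale decomposition $M_t := N_t - \int_0^t\lambda^*_s\,\mathrm{d}s$ to split $w_3^\tau$ into four sub-covariances over the two disjoint intervals, observe that two of them vanish by the tower/martingale property, and evaluate the remaining two via the stationary autocovariance $R_\lambda(h) = \sigma^2_\infty\mathrm{e}^{-(\beta-\alpha)|h|}$ and the mixed identity $\mathbb{E}[(M_b-M_a)\lambda^*_b] = \alpha\bar\lambda(1-\mathrm{e}^{-(\beta-\alpha)(b-a)})/(\beta-\alpha)$, the latter derived by a product-rule argument on $\lambda^*(M-M_a)$; the resulting contributions $\sigma^2_\infty$ and $\alpha\bar\lambda$ combine in the same way.
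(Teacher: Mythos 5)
Your proposal is correct and follows essentially the same route the paper indicates: the paper's proof of Propositions~\ref{prp-expectation-main}--\ref{prp-covariance-main} is precisely ``a simple application of Dynkin's formula'' to the Markov pair $(N_t,\lambda^*_t)$, which is what you carry out in detail (generator, stationary moments of $\lambda^*$, conditional-drift peeling, and the ODE for the mixed moment), and your algebra checks out against Equation~\ref{eq-m3}.
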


\begin{proof}
The proofs of Propositions~\ref{prp-expectation-main}, \ref{prp-variance-main}, and \ref{prp-covariance-main} are a simple application of Dynkin's formula as in Equation~\ref{eq-dynkin}.
\end{proof}

\textbf{Empirical moments.}
We begin by representing \(\mu_1\), \(\mu_2\), and \(\mu_3\) as the empirical mean, variance, and covariance of the number of events within an interval.
The specifications for \(\tau\) and \(\delta\) are predetermined and will be provided in the forthcoming Equations~\ref{eq-emp1}, \ref{eq-emp2} and \ref{eq-emp3} respectively.

The quantities \(w^{\tau}_1\), \(w^{\tau}_2\), and \(w^{\tau}_3\) given in Equations~\ref{eq-m1}, \ref{eq-m2} and \ref{eq-m3} respectively are functions of \(\vartheta=(\lambda,\beta,\alpha)\) which we want to estimate.
Let \(K_i\) denote the number of events falling in bin~\(i\) after discarding a pre-determined number of events.
Also, let \(n_{\sharp}\) be the number of bins remaining.
In this case, we have
\begin{align}
    m_1 &= \frac{1}{n_{\sharp}}\sum_{i=1}^{n_{\sharp}}K_i \label{eq-emp1}\\
    m_2 &= \frac{1}{n_{\sharp}}\sum_{i=1}^{n_{\sharp}}K_i^2-m_1^2 \label{eq-emp2}\\
    m_3 &= \frac{1}{n_\ast}\sum_{i=1}^{n_\ast} \Big(K_i\times K_{i+\Delta}\Big) -  \left(\frac{1}{n_\ast}\sum_{i=1}^{n_\ast} K_i\right)\times\left(\frac{1}{n_\ast} \sum_{i=1}^{n_\ast}K_{i+\Delta}\right), \label{eq-emp3}
\end{align}
where \(n_\ast := n_{\sharp} - \Delta\) and \(\Delta = \lceil\delta/\tau\rceil\).
Figure~\ref{fig:line-grap} gives an illustration of the empirical part of the moment matching strategy.
It is customary to discard a number of events at the beginning of our collected since we are working under asymptotic expressions, wherein \(t\rightarrow\infty\).
We may then define a function \(\widehat{h}(\vartheta)\) of the form
\begin{align}
    \widehat{h}(\vartheta)=\begin{pmatrix}
    m_1 - w^{\tau}_1\\
    m_2 - w^{\tau}_2\\
    m_3 - w^{\tau}_3
    \end{pmatrix}.
\end{align}
In our case, there exists a unique solution to \(\widehat{h}(\vartheta)=0\).
The method of moments estimator \(\widehat{\vartheta}\) can be identified by minimizing the criterion function which is equivalent to the squared sum \(\widehat{h}(\vartheta)^{\prime} \widehat{h}(\vartheta)\) \citep[cf.][]{hall2004generalized}.

Extensions to multivariate Hawkes processes are explored in \citet{achab2018uncovering}, establishing a connection between the branching structure of a multivariate Hawkes process and its implications for Granger causality.
Their nonparametric cumulant matching method surpasses previous efforts, as seen in Wiener--Hopf algorithms in \citet{bacry2016first}, due to its diminished complexity.

For kernels that induce non-Markovian Hawkes processes, readers are directed to the works of \citet{Cui_Hawkes_Yi_2020} and \citet{cui2022moments}, which provide alternatives to methodologies presented in this section.

\subsubsection{Additional methodologies}\label{additional-methodologies}

A drawback of the binning estimation methods mentioned earlier lies in the necessity for a pre-determined selection of the grid size.
This prerequisite frequently leads to models that are either overfitted or underfitted.
To tackle this issue, \citet{donnet2020nonparametric} presents a Bayesian nonparametric approach to model Hawkes processes, eliminating the requirement for a pre-defined grid size.
They establish posterior concentration rates for HPs and demonstrate these results by employing a nonparametric histogram representation of the triggering kernel.
This kernel is specifically inspired by neuroscience, particularly in simulating the interactions of neurons in the brain.
In their formulation of the histogram kernel, the compact set \((0,\bar{E})\) is defined with \(\bar{K}\) components, change points at \(\textbf{s}=(s_0=0,s_1, \dots, s_{\bar{K}-1}, s_{\bar{K}}=E)\) and corresponding heights \(v=(v_1,\dots, v_{\bar{K}})\), subject to the constraints
\[
    \sum_{k=1}^{\bar{K}} v_k=1
\]
and
\[
    \phi(\,\cdot\,|\,\bar{K},v,\textbf{s}) = \delta \cdot \sum_{k=1}^{\bar{K}} \frac{v_k}{s_k - s_{k-1}} {\textbf{1}}_{(s_{k-1}, s_k)}(\cdot).
\]
The indicator variable \(\delta\) adheres to a Bernoulli distribution with some probability \(p\), determining whether the histogram function is active or if all component heights are zero.
The model parameters are inferred using Reversible-jump Markov chain Monte Carlo (RJMCMC) \citep{green1995reversible}, a method known for its computational demands.
RJMCMC is a multidimensional Bayesian inference technique that facilitates transitions between various parameter spaces determined by possible values of \(\bar{K}\).
This investigation reveals a drawback: RJMCMC is computationally expensive and exhibits slower mixing compared to standard approaches.

In the context of addressing missing data within scenarios characterized by gaps in recording windows for continuous time-points, as examined by \citet{le2018multivariate}, pertinent references can be found in the works cited therein \citet{leigh2022parameter}.
This issue is especially pertinent in situations where data is characterized by precise yet intermittent recordings.

\subsubsection{Discrete-time Hawkes processes}\label{sec-discrete-time-hawkes}

Instead of trying to fit a continuous-time model to discrete-time data, we may use the \emph{discrete-time Hawkes process}, which adapts the classical Hawkes process to this setting.
This model has been applied to diverse datasets, from modelling terrorist activity \citep{porter2012self, white2013a} to COVID-19 cases \citep{browning2021simple}.
We will first adjust the counting process terminology from continuous time to discrete time then define the discrete-time Hawkes process.

In the \(t \in \mathbb{N}_0\) discrete-time case \(N_t\) still has the interpretation of the number of events up to time \(t\), though we can now have multiple arrivals at the same time.
Specifically, say that \(Y_t := N_t - N_{t-1} \in \mathbb{N}_0\) is the number of arrivals at time \(t\), and \(N_t = \sum_{s=1}^t Y_s\).
Instead of the conditional intensity process, we now overload the meaning of \(\lambda^*_t\) to refer to
\begin{equation}\phantomsection\label{eq-discrete-time-conditional-intensity}{
    \lambda^*_t = \mathbb{E}[ N_t - N_{t-1} \mid \mathcal{H}_{t-1}] = \mathbb{E}[ Y_t \mid \mathcal{H}_{t-1}] \,.
}\end{equation}
The history \(\mathcal{H}_t\) is now the information available up to (and including!) time \(t\), which is just \(Y_1, \dots, Y_t\).
The \(Y_t\) random variable typically has, when conditioned on \(\mathcal{H}_{t-1}\), a fixed parametric distribution such as the negative binomial distribution.
Denote \(p(\cdot ; m, \psi)\) to be the p.m.f. for this parametric distribution with mean~\(m\) and dispersion \(\psi\), then from Equation~\ref{eq-discrete-time-conditional-intensity} \(m = \lambda^*_t\).

The discrete-time Hawkes process is then the discrete-time counting process \(N_t\) whose \(\lambda^*_t\) for \(t \in \mathbb{N}_0\) is (analogous to Equation~\ref{eq-hawkes-intensity})
\begin{equation}\phantomsection\label{eq-discrete-time-hawkes-intensity}{
    \lambda^*_t = \lambda + \sum_{i=1}^{t-1} Y_i \, \mu(t-i) \,,
}\end{equation}
where the sum is interpreted as zero if \(t=1\).
Algorithm~\ref{alg:discrete-simulation} simply shows the discrete-time Hawkes process simulation algorithm which is implied by Equation~\ref{eq-discrete-time-hawkes-intensity}.

\begin{multicols}{2}

We still have the interpretations of \(\lambda > 0\) as the base intensity and \(\mu : \mathbb{N} \to \mathbb{R}_+\) as the excitation function.
It is common to set \(\mu(s) = \eta g(s)\), where \(g\) is a probability mass function with support over \(\mathbb{N}\) (not \(\mathbb{N}_0\)) and \(\eta > 0\) represents the expected number of offspring per arrival (just as in Section~\ref{sec-immigration-birth-view}).

\columnbreak

\begin{algorithm}[H]
\caption{Simulate discrete-time Hawkes}
\label{alg:discrete-simulation}
\begin{algorithmic}
\Require $\theta = (\lambda, \mu, \psi)$, $T$
\State $\lambda^*_1 \gets \lambda$
\State $Y_1 \sim p(\cdot ; \lambda^*_1, \psi)$
\For{$t \gets 2$ to $T$}
    \State $\lambda^*_t \gets \lambda + \sum_{i=1}^{t-1} Y_i \, \mu(t-i)$
    \State $Y_t \sim p(\cdot ; \lambda^*_t, \psi)$
\EndFor
\State \Return $\{Y_1, \dots, Y_T\}$
\end{algorithmic}
\end{algorithm}

\end{multicols}

\subsection{Dynamic contagion process}\label{sec-dynamic-contagion-process}

The inception of point processes that encompass both self and external excitations can be traced back to 2002 when the model was introduced by the authors \citet{BremaudMassoulie2002} under highly general conditions.
Subsequently, \citet{DassiosZhao2011} integrated the shot-noise Cox processes as the external component, while retaining the Hawkes process for self-excitations.
This combined model is referred to as the \emph{dynamic contagion process}.
The dynamic contagion process has a stochastic intensity function \(\lambda(t)\) of the form:
\[
    \lambda^*_t \!=\! \mathcal{B}_0(t) + \sum_{i:t>T_i} \mathcal{B}_1(Y_i,t\!-\!T_i) \!+\! \sum_{i:t>S_i} \mathcal{B}_2(X_i,t\!-\!S_i),
\]
where \(\mathcal{B}_0\), \(\mathcal{B}_1\) and \(\mathcal{B}_2\) are functions whose definitions will be made precise in Definition~\ref{def-main-intensity} below.
The sequence \((T_i)_{i\geq 1}\) denotes the event times of \(N_t\), where the occurrence of an event induces the intensity to grow by an amount \(\mathcal{B}_1(Y_i,t\!-\!T_i)\): this element captures self-excitation.
At the same time, external events can occur at times \(S_i\) and stimulate with a portion of \(\mathcal{B}_2(X_i,t\!-\!T_i)\): this is the externally-excited part.
The quantities \(X\) and \(Y\) are positive random variables describing the amplitudes by which \(\lambda\) increases during event times.
The quantity \(\mathcal{B}_0:\mathbb{R}_+\mapsto\mathbb{R}_+\) denotes the \emph{deterministic} base intensity.

We now specify the forms for \(\mathcal{B}_0\), \(\mathcal{B}_1\), and \(\mathcal{B}_2\) based on the dynamic contagion process defined in \citet{DassiosZhao2011}.

\begin{definition}[]\protect\hypertarget{def-main-intensity}{}\label{def-main-intensity}

The dynamic contagion process is a point process \(N_t\) on \(\mathbb{R}_+\) with the non-negative \(\mathcal{F}_t\) conditional random intensity
\begin{equation}\phantomsection\label{eq-detailed-intensity-function}{
    \lambda^*_t \!=\! a \!+\! (\lambda_0\!-\!a) \mathrm{e}^{-\delta t} \!+\! \sum_{i=1}^{N_t} Y_i \mathrm{e}^{-\delta (t \!-\! T_i)} \!+\! \sum_{i=1}^{J_t} X_i \mathrm{e}^{-\delta (t \!-\! S_i)},
}\end{equation}
for \(t\geq 0\), where we have the following features:

\begin{itemize}
\tightlist
\item
  \textbf{Deterministic background.} \(a\geq 0\) is the constant mean-reverting level, \(\lambda_0>a\) is the initial intensity at time \(t=0\), \(\delta>0\) is the constant rate of exponential decay.
  \(\mathcal{B}_0(t)=a+(\lambda_0-a) \mathrm{e}^{-\delta t}\);
\item
  \textbf{External-excitations.} \(X_i\) are levels of excitation from an external factor.
  They form a sequence of independent and identically distributed positive elements with distribution function \(H(c)\), \(c>0\).
  \(S_i\) are the times at which external events happen and it follows a Poisson process \(J_t\) of constant rate \(\rho>0\).
  Note that \(\mathcal{B}_2(X_i,t-S_i) := X_i \mathrm{e}^{-\delta(t-S_i)}\).
\item
  \textbf{Self-excitations.} \(Y_i\) are levels of \emph{self-excitation}, a sequence of independent and identically distributed positive elements with distribution function \(G(h)\), \(h>0\), occurring at random \(T_i\).
  Following the occurrence of events, the impact of these events will saturate and the rate at which this occurs is determined by the constant \(\delta\).
  Note that \(\mathcal{B}_1(Y_i,t-T_i) := Y_i \mathrm{e}^{-\delta(t-T_i)}\).
\end{itemize}

\end{definition}

Note that from Definition~\ref{def-main-intensity}, if we set \(X \equiv 0\) and \(Y\) to be a constant, we retrieve the original Hawkes model.
If we set \(Y\equiv 0\) and \(X\) to be a positive random elements, we get the model proposed by \citet{CoxIsham1980}.
In addition, setting \(X \equiv Y \equiv 0\) returns us the inhomogeneous Poisson process.
Furthermore, letting \(X \equiv Y \equiv 0\) and \(\lambda_0 = a\) simplify to the Poisson process.
This is a deliberate choice with \(\delta\) being shared between the background rate \(\mathcal{B}_0\) \(\mathcal{B}_1\) and \(\mathcal{B}_2\) to ensure that the process inherits the Markov property, see \citet{BremaudMassoulie2002}.
This property is essential to the derivation of the closed form functionals such as the mean and variance of the point process.
The term \(\delta\) determines the rate at which the process decays exponentially from following arrivals of self-excited and externally-excited events.

Inference for point processes with the combined elements of endogenous and exogenous components have been studied in the machine learning community.
\citet{slinderman} introduced a multidimensional counting process combining a sparse log Gaussian Cox process \citep{moller-lgcp} and a Hawkes component to uncover latent networks in the point process data.
They showed how the superposition theorem of point processes enables the formulation of a fully Bayesian inference algorithm.
\citet{mohler-hawkes-cox} considered a self-exciting process with background rate driven by a log Gaussian Cox process and performed inference based on an efficient Metropolis adjusted Langevin algorithm for filtering the intensity.
\citet{simma} proposed an expectation-maximization inference algorithm for Cox-type processes incorporating self-excitations via marked point processes and applied to data from a very large social network.
We note that a variety of methods have been developed for the estimation of self \emph{or} external excitations for point processes: variational flavors \citep{mangion-keyuan}, expectation propagation \citep{cseke-marginals}, the usage of thinning points and uniformization for non-stationary renewal processes \citep{gunter, teh-rao}.

\subsection{Stochastic differential equations and beyond}\label{stochastic-differential-equations-and-beyond}

This section provides a concise overview of several additional avenues in Hawkes processes research: stochastic differential equations (SDEs), graphs, and neural networks.
We will explore the applications of SDEs in Hawkes processes in some detail, focusing particularly on their formulation and inference.

In the first direction that concerns SDEs, \citet{lee2016hawkes} generalized both the classical model and that of \citet{DassiosZhao2011} to incorporate randomness in the triggering kernel.
They introduced contagion parameters to regulate the levels of excitation.
In this model, each tier of the excitation function is treated as a stochastic process, addressed through an SDE whose parameters are inferred using Bayesian methods.
The objective of this model is to provide an improved approximation, particularly in scenarios where the intensities of self-excitation are heightened with correlated levels of contagion.

The model is similar to that of Equation~\ref{eq-detailed-intensity-function}:
\begin{equation}\phantomsection\label{eq-first-lambda}{
    \lambda^*_t = a + (\lambda_0-a) \mathrm{e}^{-\delta t} + \sum_{i:t>T_i} Y(T_i) \, \mathrm{e}^{-\delta(t-T_i)}.
}\end{equation}
However, this time \(Y\) is a stochastic process.
We let \(Y_i := Y(T_i)\) for ease of notation.
The excitation levels \(Y\) measure the impact of clustering or contagion of the event times.
To see this, observe in Equation~\ref{eq-first-lambda} that whenever \(Y\) is high and positive, it imposes a greater value on the intensity \(\lambda\), thereby increasing the probability of generating an event in a shorter period of time.
This phenomenon causes the clustering observed.

SDEs are utilized to depict the progression of excitations' levels.
Expressing the development of contagiousness in mathematical terms involves formulating an equation that includes a derivative, as elaborated further below.
The alterations in excitation levels are presumed to adhere to the stochastic differential equation:
\[
    Y_t = \int_0^t \hat{\mu}(t, Y_s) \,\mathrm{d}s + \int_0^t \hat{\sigma}(s, Y_s) \,\mathrm{d}W_s,
\]
where \(W\) is a standard Brownian motion and \(t\in[0,T]\).
Different settings of the functionals \(\hat{\mu}\) and \(\hat{\sigma}\) lead to different versions of SDEs.
An important criterion for selecting the appropriate choices of the pair \((\hat{\mu},\hat{\sigma})\) essentially comes down to how we decide to model the levels of excitation.
A standing assumption is that the contagion process must be positive, i.e.,
\begin{equation}\phantomsection\label{eq-y-positive-assumption}{
\textrm{The contagion parameters $Y_t > 0$, $\forall t \geq 0$}.
}\end{equation}
This is necessary because the levels of excitation, denoted by \(Y\), act as parameters that scale the magnitude of the influence of each past event, taking into account the fact that the conditional intensity function is non-negative.
Some notable examples include Geometric Brownian Motion: \(\hat{\mu}=\mu+\frac{1}{2}\sigma^2 Y\), \(\hat{\sigma}=\sigma Y\) \citep{kloeden-platen}; Square-Root-Processes: \(\hat{\mu}=k(\mu-Y)\), \(\hat{\sigma}=\sigma\sqrt{Y}\); Langevin equation: \(\hat{\mu}=k(\mu-Y)\), \(\hat{\sigma}=\sigma\), and their variants \citep{liptser-shiryaev}.

While the positivity of \(Y\) is guaranteed for Geometric Brownian Motion (GBM), this may not be true for other candidates such as Langevin dynamics or Square-Root-Processes.
This is because they possess the inherent property that nothing prevents them from going negative, and thus they may not be suitable choices to model the levels of excitation.
Specifically, Square-Root-Processes can be negative if the so-called Feller condition \(2k\mu>\sigma^2\) is not satisfied \citep{feller-condition, liptser-shiryaev}.
In some real-life applications, this condition may not be respected, thus raising the possibility of violating the assumption in Equation~\ref{eq-y-positive-assumption}.

For this reason, we focus on the Geometric Brownian Motion and we tilt the Langevin dynamics by exponentiating it so that the positivity of \(Y\) is ensured \citep{black-kara}.

\begin{itemize}
\tightlist
\item
  Geometric Brownian Motion (GBM):
  \[
    Y_t = \int_0^t \left(\mu+\frac{1}{2}\sigma^2\right) Y_s \,\mathrm{d}s + \int_0^t \sigma Y_s\, \mathrm{d}B_s,
  \]
  where \(\mu\in\mathbb{R}\) and \(\sigma>0\).
\item
  Exponential Langevin:
  \[
    \log Y_t = \int_0^t k(\mu-Y_s) \,\mathrm{d}s + \int_0^t \sigma \,\mathrm{d}B_s,
  \]
  where \(k,\mu\in\mathbb{R}\) and \(\sigma>0\).
\end{itemize}

In light of Section~\ref{sec-immigration-birth-view}, the generative view provided by Equation~\ref{eq-first-lambda} facilitates a systematic treatment of scenarios where each observed event time can be distinguished into \emph{immigrants} and \emph{offspring}, terms we will define shortly.
This distinction is crucial for conducting inference algorithms.
An event time \(T_i\) is termed an \emph{immigrant} if it is generated from the background intensity \(a + Y_0 \mathrm{e}^{-\delta t}\); otherwise, we refer to it as an \emph{offspring}.
Hence, it is natural to introduce a variable describing the specific process to which each event time \(T_i\) corresponds.
We do that by introducing the random variable \(Z_i := Z_{ij}\), where
\[
\begin{aligned}
    Z_{i0} &= 1\quad\text{if event $i$ is an immigrant, }\\
    Z_{ij} &= 1\quad\text{if event $i$ is an offspring of $j$. }
\end{aligned}
\]
An equivalent interpretation of the set \(\{Z_{ij}=1\}\) is the following: event \(i\) was caused by event \(j\).
Furthermore, each \(Z_i\) is a special indicator matrix where only one of its element is unity, i.e.~the vector \(Z_i=(Z_{i0},Z_{i1}, \dots, Z_{i,(i-1)})\) contains a single 1 and 0 otherwise.
For a fixed \(i\), we have \(\sum_{j=0}^{i-1} Z_{ij} = 1\).

\textbf{Inference.} A hybrid MCMC algorithm is introduced, which updates the parameters individually, either by direct draws using Gibbs sampling or via the Metropolis-Hastings (MH) algorithm.
This hybrid approach merges the characteristics of both the Gibbs sampler and the MH algorithm, offering substantial flexibility in designing inference methods for various parameters within our model.
To understand the mechanics of this concept, let's examine a two-dimensional parameterization as an example.
Assuming the posterior \(\p(\Theta_B|\Theta_A)\) follows a known distribution, we can directly conduct inference using the Gibbs sampler.
However, if \(\p(\Theta_A|\Theta_B)\) can only be evaluated but not directly sampled, we resort to employing an MH algorithm to update \(\Theta_A\) given \(\Theta_B\).
In the MH step, candidates are drawn from \(\q(\Theta^{\prime}_A|\Theta^{(k)}_A,\Theta^{(k)}_B)\), indicating that the current step may depend on the past draw of \(\Theta_A\).
The Metropolis step samples from \(\q(\Theta_A|\Theta^{(k)}_A,\Theta^{(k)}_B)\), suggesting that we draw \(\Theta^{(k+1)}_A\sim \q(\Theta^{\prime}_A|\Theta^{(k)}_A,\Theta^{(k)}_B)\), and the acceptance or rejection of the proposed candidate is determined based on the acceptance probability denoted by \(AP\):
\begin{equation}\phantomsection\label{eq-ap}{
    AP(\Theta^{\prime}_A) = \min\left(\frac{\p(\Theta^{\prime}_A|\Theta_B)\q(\Theta^{(k)}_A|\Theta^{\prime}_A,\Theta^{(k)}_B)}{\p(\Theta_A|\Theta_B)\q(\Theta^{\prime}_A|\Theta^{(k)}_A,\Theta^{(k)}_B)}, 1\right).
}\end{equation}
The hybrid algorithm is as follows: given \(\left(\Theta^{(0)}_A,\Theta^{(0)}_B\right)\), for \(k=0, 1, \dots, K\),

\begin{enumerate}
\def\labelenumi{\arabic{enumi}.}
\tightlist
\item
  Sample \(\Theta_A^{(k+1)}\sim \q(\Theta_A|\Theta^{(k)}_A,\Theta^{(k)}_B)\) and \emph{accept} or \emph{reject} \(\Theta_A^{(k+1)}\) based on Equation~\ref{eq-ap},
\item
  Sample \(\Theta_B^{(k+1)}\sim \p(\Theta_B|\Theta^{(k+1)}_A)\).
\end{enumerate}

A hybrid algorithm typically comprises any combination of Gibbs and MH steps.
These algorithms are highly flexible, as the only essential requirement is that posterior distributions can be computed.
Gibbs sampling is utilized to perform inferences for \(Z\) and the parameters of the stochastic differential equations (both GBM and exponential Langevin).
For the inferences of the remaining parameters \(a\), \(\lambda_0\), \(\delta\), and \(Y_i\) as in Equation~\ref{eq-first-lambda}, Normal distributions may be employed as proposals.
For implementation details including the discretization of \(Y\), refer to \citet{lee2016hawkes}.

\subsubsection{Additional research directions}\label{additional-research-directions}

The second avenue of exploration centers around graph-based methodologies.
These approaches empower users to grasp the interconnections among components within multivariate hyperparameters by revealing the underlying network structure.
Typically, this is achieved by estimating the infectivity matrix, where the \mbox{\(ij\)-th} element describes the expected number of offspring events anticipated in dimension \(i\) given an event in dimension \(j\) \citep{liu2018exploiting}.
\citet{slinderman} integrated hyperparameters into random graph models, decomposing the infectivity matrix into a binary adjacency matrix indicating network sparsity, and a weight matrix characterizing interaction strength.
Model parameters were estimated using a parallelizable Gibbs sampler.
\citet{guo2015bayesian} expanded on this by introducing a novel Bayesian language model, aiming to reveal latent networks, particularly focusing on examining dialogue evolution within a social network over time.
Subsequently, \citet{linderman2017bayesian} addressed the task of inferring latent structures within a social network characterized by incomplete data, presenting a sequential Monte Carlo approach for data recovery.
Expanding on their prior research, \citet{bacry2015hawkes} improved their approach by incorporating sparsity and low-rank induced penalization.
This adjustment resulted in an excitation matrix characterized by a limited number of non-zero and independent rows, with the aim of enhancing scalability and refining kernel estimation.
Following a similar trajectory, \citet{bacry2016mean} and \citet{bacry2020sparse} pursued inference for higher-dimensional hyperparameters by representing the excitation function as a low-rank approximation with regularization.
To improve computational efficiency in parameter recovery across higher dimensions, the Mean-Field Hypothesis assumed minor fluctuations in stochastic intensity.

In a the next direction, the nonlinearity inherent in the intensity function is approached by incorporating a neural network.
Recurrent neural networks, for instance, encode sequences of input and output states, where each state is influenced by the preceding one, and the hidden state captures information from past states.
The parameters of the neural network are optimized through a fitting procedure involving nonlinear functions, such as sigmoidal or hyperbolic tangent.
Addressing challenges encountered in recurrent neural networks, the long short-term memory (LSTM) architecture resolves the vanishing gradient problem and extends memory by modeling HP intensities of multiple events.
This is accomplished through the utilization of `forget gates' that regulate the influence of past events on the current state \citep{mei2017neural}.
Other neural network approaches, such as self-attentive/transformer models \citep{zuo2020transformer, zhang2020self} and graph convolution networks \citep{shang2019geometric}, demonstrate promising computational efficiencies.

\clearpage

\clearpage

\section*{Supplementary Materials: Simulation Algorithms Pseudocode}

\setcounter{algorithm}{0}

\begin{algorithm}[H]
\caption{Hawkes Process Simulation by Thinning}
\label{alg:thinning-simulation}
\begin{algorithmic}
\Require $\theta = (\lambda, \mu)$, $T$ \Comment{Model parameters and simulation time horizon}
\State $n = 0$, $t \gets 0$, $\lambda^*_t \gets \lambda$
\While{True}
    \State $M \gets \lambda^*_t$ \Comment{Require that $M \ge \lambda^*_s$ for $t < s < T_{n+1}$}
    \State Generate $E$ as an exponential random variable with rate $M$
    \State $t \gets t + E$ \Comment{Update current time}
    \State Stop looping if the current time $t$ exceeds $T$
    \State Update $\lambda^*_t$ based on $\mu(t)$ and past events $\{T_1, \dots, T_n\}$
    \State Generate $u$ as a uniform random variable over $(0, M)$
    \State If $u > \lambda^*_t$ then skip to the next iteration
    \State $n \gets n + 1$ and $T_n \gets t$  \Comment{Record the time of the event}
    \State Update $\lambda^*_t$ to reflect the occurrence of the new event at time $t$
\EndWhile
\State \Return $\{T_1, \dots, T_n\}$
\end{algorithmic}
\end{algorithm}

\begin{algorithm}[H]
\caption{Hawkes Process Exact Simulation by Composition}
\label{alg:exact-simulation}
\begin{algorithmic}
\Require $\theta = (\lambda, \alpha, \beta)$, $N$ \Comment{Model parameters and number of events}
\State $t \gets 0$, $\lambda^*_t \gets \lambda$
\For{$i \gets 1$ to $N$}
    \State Generate $U_1, U_2$ as i.i.d. standard uniform random variables
    \State $E_1 \gets - \log(U_1) / \lambda$
    \State $E_2 \gets - \log(1 + \frac{\beta}{\lambda^*_t + \alpha - \lambda} \log(U_2)) / \beta$
    \State $t_{\text{prev}} \gets t$
    \State $t \gets t + \min(E_1, E_2)$ \Comment{Choose the next event time}
    \State $T_i \gets t$ \Comment{Store the time of the $i$-th arrival}
    \State $\lambda^*_t \gets \lambda + (\lambda^*_{t_{\text{prev}}} + \alpha - \lambda) \cdot \exp(-\beta \cdot (t - t_{\text{prev}}))$ \Comment{See Equation~\ref{eq-exp-markov-decay}}
\EndFor
\State \Return $\{T_1, \dots, T_N\}$
\end{algorithmic}
\end{algorithm}

\end{document}